\documentclass[11pt]{article}

\usepackage[numbers,sort]{natbib}
\usepackage{fullpage}
\usepackage{amsmath,amssymb,amsthm}
\usepackage[all,ps,dvips]{xy}
\usepackage{mathrsfs}
\usepackage[titles]{tocloft}
\usepackage[normalem]{ulem}
\usepackage{algorithmic} 
\usepackage[linesnumbered,vlined,boxruled,boxed,algo2e]{algorithm2e}
\usepackage{url}
\usepackage[b]{esvect}
\usepackage{graphics, graphicx}

\usepackage{enumerate}

\usepackage{xcolor}
\definecolor{monvert}{rgb}{0.05,0.64,0.05}
\usepackage[pdftex,                %
    bookmarks         = true,
    bookmarksnumbered = true,
    pdfpagemode       = None,
    pdfstartview      = FitH,
    pdfpagelayout     = SinglePage,
    colorlinks        = true,
   citecolor         =blue,
    urlcolor          = magenta,
    pdfborder         = {0 0 0}
    ]{hyperref}

\def\C {\ensuremath{\mathbf{C}}}
\def\CC {\ensuremath{{\mathsf{C}}}}

\def\QQ {\ensuremath{\mathsf{Q}}}

\def\R {\ensuremath{\mathbf{R}}}

\def\Z {\ensuremath{\mathbb{Z}}}
\def\ZZ {\ensuremath{\mathsf{Z}}}
\def\RR {\ensuremath{\mathsf{R}}}
\def\K {\ensuremath{\mathbb{K}}}

\def\mA{\ensuremath{\mathbf{A}}}

\def\GL{\ensuremath{{\rm GL}}}

\def\crit{\ensuremath{{\rm crit}}}

\def\ext{\ensuremath{{\rm ext}}}
\def\Ext{\ensuremath{{\rm Ext}}}

\def\y {\ensuremath{\mathbf{y}}}

\def\z {\ensuremath{\mathbf{z}}}
\def\x {\ensuremath{\mathbf{x}}}

\def\X {\ensuremath{\mathbf{X}}}
\def\Y {\ensuremath{\mathbf{Y}}}

\def\limeps {\lim_{\varepsilon\rightarrow 0}}
\def\eps {\varepsilon}

\def\Veps {V_{\varepsilon}}

\def\Re {\ensuremath{ \R\langle \varepsilon \rangle}}
\def\Ce {\ensuremath{ \C\langle \varepsilon \rangle}}
\def\Ke {\ensuremath{ \K\langle \varepsilon \rangle}}
\def\RRe {\ensuremath{ \RR\langle \varepsilon \rangle}}
\def\CCe {\ensuremath{ \CC\langle \varepsilon \rangle}}

\def\RRz {\ensuremath{ \RR\langle \zeta \rangle}}
\def\CCz {\ensuremath{ \CC\langle \zeta \rangle}}

\def\RRze {\ensuremath{ \RR\langle \zeta \rangle\langle \eps\rangle}}
\def\CCze {\ensuremath{ \CC\langle \zeta \rangle\langle \eps\rangle}}

\def\RRzh {\ensuremath{ \RR\langle \zeta \rangle\langle \eta\rangle}}

\def\RRzhe {\ensuremath{ \RR\langle \zeta \rangle\langle \eta\rangle\langle \eps\rangle}}
\def\CCzhe {\ensuremath{ \CC\langle \zeta \rangle\langle \eta \rangle\langle \eps\rangle}}

\def\union {\ensuremath{ \bigcup}}
\def\intersection {\ensuremath{ \bigcap}}
\def\inter {\ensuremath{ \intersection}}

\newcommand{\be}{\begin{equation}}
\newcommand{\ee}{\end{equation}}





\def\ifm#1#2{\relax\ifmmode#1\else#2\fi} 


\def\vs{\smallskip}

\def\func#1{{\sf #1}}

\usepackage{etex,etoolbox}

\makeatletter
\providecommand{\@fourthoffour}[4]{#4}
\def\fixstatement#1{%
  \AtEndEnvironment{#1}{%
    \xdef\pat@label{\expandafter\expandafter\expandafter
      \@fourthoffour\csname#1\endcsname\space\@currentlabel}}}

\globtoksblk\prooftoks{1000}
\newcounter{proofcount}

\long\def\proofatend#1\endproofatend{%
  \edef\next{\noexpand\begin{proof}[{\noexpand \textbf{Proof of \pat@label}} ] }%
  \toks\numexpr\prooftoks+\value{proofcount}\relax=\expandafter{\next#1\end{proof} {\noexpand \newline}  {\noexpand \newline}}
  \stepcounter{proofcount}}

\def\printproofs{%
  \count@=\z@
  \loop
    \the\toks\numexpr\prooftoks+\count@\relax
    \ifnum\count@<\value{proofcount}%
    \advance\count@\@ne
  \repeat}
\makeatother

\title{A probabilistic algorithm to compute the real dimension of a semi-algebraic set}

\author{Mohab Safey El Din\\
  Universit\'e Pierre and Marie Curie and Institut Universitaire de France \\INRIA Paris-Rocquencourt\\ \texttt{Mohab.Safey@lip6.fr}\\ \\
  Elias Tsigaridas\\ 
  Universit\'e Pierre and Marie Curie  and INRIA Paris-Rocquencourt
  \\ \texttt{Elias.Tsigaridas@inria.fr}}

\date{April 6, 2013}

\newtheorem{definition}{Definition}
\newtheorem{theorem}{Theorem}
\newtheorem{corollary}[theorem]{Corollary}
\newtheorem{proposition}[theorem]{Proposition}
\newtheorem{lemma}[theorem]{Lemma}
\newtheorem{remark}[theorem]{Remark}

\newtheorem*{claim*}{Claim}

\fixstatement{theorem}
\fixstatement{corollary}
\fixstatement{proposition}
\fixstatement{lemma}

\begin{document}

\maketitle 

\begin{abstract}
  Let $\RR$ be a real closed field (e.g. the field of real numbers)
  and $\mathscr{S} \subset \RR^n$ be a semi-algebraic set defined as
  the set of points in $\RR^n$ satisfying a system of $s$ equalities
  and inequalities of multivariate polynomials in $n$ variables, of
  degree at most $D$, with coefficients in an ordered ring $\ZZ$
  contained in $\RR$.

  We consider the problem of computing the {\em real dimension}, $d$,
  of $\mathscr{S}$.  The real dimension is the first topological
  invariant of interest; it measures the number of degrees of
  freedom available to move in the set.  Thus, computing the real
  dimension is one of the most important and fundamental problems in
  computational real algebraic geometry.
  
  The problem is ${\rm NP}_{\mathbb{R}}$-complete in the
  Blum-Shub-Smale model of computation.  The current algorithms
  (probabilistic or deterministic) for computing the real dimension
  have complexity $(s \, D)^{O(d(n-d))}$, that becomes 
  $(s \, D)^{O(n^2)}$ in the worst-case.

  The existence of a probabilistic or deterministic algorithm for
  computing the real dimension with single exponential complexity with
  a factor better than ${O(n^2)}$ in the exponent in the worst-case,
  is a longstanding open problem.

  We provide a positive answer to this problem by introducing a
  probabilistic algorithm for computing the real dimension of a
  semi-algebraic set with complexity $( s\, D)^{O(n)}$.
\end{abstract}

\newpage

\section{Introduction}

A {\em semi-algebraic set} $\mathscr{S} \subset \RR^n$, where $\RR$ is
a real closed field, is defined as the set of points in $\RR^n$
satisfying a Boolean formula whose atoms are polynomial equalities and
inequalities.

Computational real algebraic or semi-algebraic geometry is the study
of effective algorithms for computing with semi-algebraic sets.
Besides being a fascinating and important research area on its own, it
is also one of the cornerstones of theoretical computer science.

Many important results rely on the foundations of real algebraic
geometry.  Let us mention non-linear computational geometry
\cite{ss-piano-I,ss-pian-II,CannyThese}, the recent breakthroughs in
combinatorial geometry on the discrete version of Kakeya problem
\cite{kms-dcg-2011,gk-am-2010,st-dcg-2012}, and the new algorithms for
non-negative matrix factorization \cite{arora2012, moitra2012} based
on testing the emptiness of semi-algebraic sets.
Last but not least, we emphasize the intrinsic connection between
computational real algebraic geometry and game theory, especially
stochastic games
\cite{bz-econ-1994,hklmt-stoc-2011,n-real-2003,sfv-siam-1997,cmh-ijgt-2008}.

Typical computational challenges in real algebraic geometry are 
algorithms for deciding the emptiness and/or computing at least one
point at each semi-algebraically connected component of a
semi-algebraic set
\cite{gh-dim-1991,RRS,BaPoRo06,r-fot-jsc-I-1992,bpr-jacm-1996,gv-subexp-jsc-1988},
algorithms to perform geometric operations such as projection (this
operation is tightly coupled with quantifier elimination)
\cite{BaPoRo06,g-tarski-jsc-1988,r-fot-jsc-I-1992,bpr-jacm-1996},
answering connectivity queries (roadmaps) \cite{BaPoRo96,HRSRoadmap,CannyThese},
computing the real dimension of a semi-algebraic set
\cite{v-ld-jsc-1999,k-focs-1997,k-rd-joc-1999} or computing more
sophisticated topological information, such as the number of
semi-algebraically connected components, the
Euler-Poincar\'e characteristic, Betti numbers
\cite{b-jsc-2006,bpr-stoc-2005,b-stoc-1996,BaPoRo06}.





Denote by $s$ the number of polynomials involved in the
description of a semi-algebraic set, by $n$ the number of 
variables, and by $D$ the maximum of the degrees of these polynomials.
We can solve almost all the problems in computational real algebraic
geometry using the generic approach of cylindrical algebraic
decomposition \cite{c-qe-1975} albeit in double exponential time,
$(s\,D)^{2^{O(n)}}$.
Even though huge effort has been invested the
last 25 years to derive algorithms with single exponential complexity
w.r.t.~the number of variables, there are problems that are still
missing an algorithm with such a complexity bound.  Moreover, even in
the case where the complexity is single exponential, the exponent is
not always $O(n)$.

Let us emphasize that improving the exponents in the complexity bounds
of algorithms in computational real algebraic geometry is not only a
theoretical challenge.  It introduces new algebraic and geometric
techniques that find applications in more general domains, and
eventually leads to efficient implementations for real-world problems.
For instance, the first improvement of the long-standing
$O(n^2)$ exponent in the complexity bound of Canny's probabilistic
algorithm \cite{CannyThese} to $O(n^{3/2})$ is based on a new
geometric connectivity result that introduced the use of a baby
steps giant steps algorithmic technique in this problem
\cite{SaSc11}.

On the other hand, the problem of computing the real dimension lacks,
up to now, an algorithm with single exponential complexity and
exponent $O(n)$.

\paragraph*{Problem statement and state-of-the-art.}
In this paper we address the problem of computing the real dimension
of a semi-algebraic set.
The following definition is in order:
\begin{definition}{\rm \cite[Section 5.3]{BaPoRo06}}
  \label{def:dim}
  Let $\mathscr{S}$ be a semi-algebraic-set of $\RR^n$, where $\RR$ is a
  real closed field.  The {\bf real dimension} of $S$ is
  the largest integer $d$ such that there exists an injective
  semi-algebraic map from $(0,1)^d$ to $\mathscr{S}$.  By definition the
  dimension of the empty set is $-1$.
\end{definition}

The best known complexity bound, in the worst case, for computing the
real dimension of a semi-algebraic set is due to Koiran and it is
$(s\,D)^{O(n^2)}$ \cite{k-rd-joc-1999}, where $s$ is the number of
polynomials used to describe the semi-algebraic set. It is based on
quantifier elimination techniques, see \cite[Alg.~14.10]{BaPoRo06} and
references therein.  A partial improvement of the $O(n^2)$ in the
exponent is due to Vorobjov \cite{v-ld-jsc-1999}. He presented an
algorithm with complexity $(s\,D)^{O(d(n-d))}$, where $d$ is the real
dimension. This bound is output sensitive, and when $d$ is a constant,
then it becomes $(s D)^{O(n)}$.  Basu, Pollack, and Roy
\cite{bpr-dsas-jms-2006} slightly improved the result of Vorobjov,
based on \cite{cfgk-focm-2003}.  They presented a complexity bound
that depends on whether $d\geq n/2$ or $d<n/2$.  This result has a
better dependence on the number of polynomials, $s$, than the one of
Vorobjov \cite{v-ld-jsc-1999}.

On the other hand, it is well understood that we can compute the
(Krull) dimension of an algebraic variety over algebraically closed
fields in time $D^{O(n)}$ \cite{gh-dim-1991,c-dim-jsc-1996}, see also
\cite{k-focs-1997}.  In the algebraically closed field case, we can
consider a sufficiently generic, random, collection of $d$
hyperplanes, for $1 \leq d \leq n$, and check whether their
intersection with the algebraic set under consideration is finite.
The largest $d$ where this is achieved imposes that the Krull
dimension of the algebraic set is $d$.
However, when we are interested in computing the real dimension of a
real algebraic, or semi-algebraic set, this strategy is not
applicable. 

It is of great interest to know if the problem of computing the dimension 
admits the same complexity bound in the real case and in the algebraically closed case. 
Quoting Koiran \cite{k-rd-joc-1999} {\it ``The main open problem is
  whether $\func{DIM}_{\mathbb{R}}$ [the real dimension] can be solved in
  time $(s\,D)^{O(n)}$''}.  Vorobjov \cite{v-ld-jsc-1999} also
mentions that {\it ``For a real variety $V$ existence of a
  probabilistic dimension algorithm with complexity bound $(s\,
  D)^{O(n)}$ is an open problem''}.

The purpose of the present work is to provide a positive answer to
this open problem.

Besides the intrinsic mathematical interest for an improved algorithm
for computing the real dimension, improvement of the complexity bound
has important consequences. Some algorithms in computational real
algebraic geometry consider the real dimension of a semi-algebraic set
as part of their input, e.g.~\cite[Theorem.~13.37]{BaPoRo06}.  Let us
also mention the recent bounds in
\cite{bb-arxiv-II-2013,bb-dcg-I-2012} on the number of
semi-algebraically connected components of a semi-algebraic set that
depend on the real dimension of some real algebraic set.  Moreover,
effective algorithms for computing the real dimension are needed to
estimate efficiently the degrees of freedom in robotic mechanisms (see
e.g. \cite{raghavan1993inverse,jin2002overconstraint}).

The problem is also very important from the complexity
theory point of view, as Koiran \cite{k-rd-joc-1999} proved that it is
${\rm NP}_{\mathbb{R}}$-complete in the Blum-Shub-Smale computation
model \cite{bss-model-1988}.

\paragraph*{Our results.}
We present an efficient algorithm for computing the real dimension of a semi-algebraic set.

Our algorithm reduces the unbounded case to the bounded one, using a
standard technique of computational real algebraic geometry introduced
in \cite{bpr-jacm-1996}.

Previous approaches for computing the real dimension of a
semi-algebraic set $\mathscr{S}$ rely on finding the largest integer
$d$ for which there exists a $d$-dimensional linear subspace, such that
the projection of $\mathscr{S}$ on this subspace has dimension $d$.
To do so they use quantifier elimination.  Therefore, the exponent in
the complexity is the dimension, $d$, multiplied by the number of
quantified variables, $n-d$.

The algorithm that we present, instead of projecting the
semi-algebraic set under consideration, exploits geometric
properties of fundamental objects of algebraic geometry, that is {\em
  polar varieties}.  Roughly speaking, polar varieties are the critical loci of
projections, e.g. \cite{BaGiHeMb97,BGHSS,BaGiHePa05} and references therein.
More precisely, we are able to prove the following:
Let $V$ be algebraic set defined by the polynomial equalities of the input
and $U$ the  open semi-algebraic set defined by the
polynomial inequalities of the input.
Then, up to a generic change of coordinates, $d$ is the largest
integer such that $\mathscr{S}$ is equal to the intersection of $U$
and the limit of the critical locus of the $d+1$ polar variety $V$,
after we perturbed it symbolically.  This way the computation of the
real dimension reduces to finding the largest integer $d$ with this
property.

The algorithm is probabilistic since we perform a random linear change
of coordinates in the beginning.  If we work over the integers, then
we denote by $\tau$ the maximum bit size of the coefficients {\em
  after} the linear change of coordinates.


Our main result is encapsulated in the following theorem:

\begin{theorem}\label{thm:main}
  Let $F = (f_1, \dots, f_p) \in \ZZ[X_1, \dots, X_n]^p$, $G = (g_1,
  \dots, g_s) \in \ZZ[X_1, \dots, X_n]^s$, where the degree of each
  $f_i$, resp. $g_j$, is at most $D$.  There exists a probabilistic
  algorithm for computing the real dimension of the semi-algebraic set
  defined by 
  \[
  f_1=\cdots=f_p=0, \quad g_1>0, \dots, g_s > 0 
  \enspace ,
  \]
  in  $(s \, D)^{O(n)}$ operations in $\ZZ$.

  If $\ZZ=\mathbb{Z}$,
  then the Boolean complexity of the algorithm $\tau \,(s \, D)^{O(n)}$.
\end{theorem}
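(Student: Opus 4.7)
The plan is to follow the roadmap sketched just before the theorem: apply a random linear change of coordinates, reduce the unbounded case to the bounded one, and then, for each candidate value $d \in \{0, 1, \ldots, n\}$, test a polar-variety characterization of the real dimension whose cost is $(sD)^{O(n)}$ per test. Since only $n+1$ candidate values need to be examined, the total cost is $n \cdot (sD)^{O(n)} = (sD)^{O(n)}$.

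First I would set up the reduction. Following the standard technique from \cite{bpr-jacm-1996}, I would introduce an infinitesimal $\varepsilon$, replace $\RR$ by $\RR\langle\varepsilon\rangle$ and intersect the input with the closed ball of radius $1/\varepsilon$; this adds one equality and one inequality, preserves the real dimension, and only multiplies $s$ by a constant. After this reduction I would apply a random linear change of coordinates to $(X_1,\ldots,X_n)$: with probability one (over a Zariski-open subset whose complement is defined by polynomials of controlled degree) the new coordinates are generic in the sense needed to make polar varieties well-behaved. This is what makes the algorithm probabilistic.

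Next I would implement and certify the geometric characterization flagged in the introduction. Let $V = \zero(f_1,\ldots,f_p) \subset \RR^n$ and $U = \{g_1>0,\ldots,g_s>0\}$. For each $d$ I would (i) compute the $(d+1)$-st polar variety of $V$ associated to the first $d+1$ coordinates --- algebraically the zero set of $F$ together with the vanishing of the $(n-d)$-minors of the appropriate truncated Jacobian; (ii) perturb $V$ symbolically, e.g.\ replacing $f_i$ by $f_i - \varepsilon_i$ with new infinitesimals, so that the perturbed variety is smooth and equidimensional; (iii) take the real trace of the critical locus of the projection onto the first $d+1$ coordinates on this perturbed variety, then take the limit as the infinitesimals go to $0$ using a standard limit-of-a-bounded-semi-algebraic-set procedure; (iv) intersect with $U$; and (v) test whether the resulting semi-algebraic set coincides with $\mathscr{S}$. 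The largest $d$ for which the test succeeds is the real dimension. Each of the five sub-steps is a combination of Gr\"obner-free geometric resolutions, critical-point methods, and emptiness/inclusion tests on semi-algebraic sets defined by $(sD)^{O(1)}$ polynomials of degree $D^{O(1)}$ in $n$ variables, each of which admits the bound $(sD)^{O(n)}$ by the algorithms of \cite{BaPoRo06,SaSc11}.

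The main obstacle is proving the geometric identity on which the characterization rests: that, after a generic change of coordinates, a point of $\mathscr{S}$ lies in the limit critical locus of the $(d+1)$-st polar variety of the perturbed $V$ intersected with $U$ if and only if $d$ does not exceed the real dimension, and equality of the two sets pinpoints $d = \dim \mathscr{S}$. This requires a careful dimension analysis: on one hand, one must show that limits of critical points of projections on the perturbed smooth variety retrieve all points of $V$ where the projection has local dimension at most $d$; on the other hand, genericity of the linear change of coordinates must be controlled by an explicit Zariski-open condition so that the probabilistic aspect is handled rigorously. The complexity claim then follows by combining the $(sD)^{O(n)}$ bounds for polar-variety computation, limit-taking via infinitesimals, and semi-algebraic emptiness testing, and the Boolean bound $\tau(sD)^{O(n)}$ follows by tracking bit-sizes through each of these well-studied primitives.
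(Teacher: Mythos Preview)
Your high-level plan matches the paper's: reduce to a bounded situation via an infinitesimal, apply a random linear change of coordinates so that the polar varieties are well-behaved, and loop over candidate dimensions using limits of polar varieties of a symbolically perturbed hypersurface. The geometric criterion you invoke in step~(v) --- that the limit of the $(d+1)$-st polar variety intersected with $U$ equals $\mathscr{S}$ iff $\dim\mathscr{S}\le d$ --- is exactly the paper's Proposition~\ref{prop:criterion:dimension}.

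The gap is in how you implement the test. You write that you will ``take the limit \ldots\ using a standard limit-of-a-bounded-semi-algebraic-set procedure'' and then ``test whether the resulting semi-algebraic set coincides with $\mathscr{S}$.'' But there is no standard procedure that returns a quantifier-free description of $\lim_{\eps\to 0} W_\eps$ for a semi-algebraic $W_\eps$ in $n$ ambient variables within $(sD)^{O(n)}$ operations: this is a quantifier-elimination problem with $n$ free variables, and one-block QE with $t$ free variables costs $(sD)^{(t+1)O(n)}$ (Theorem~\ref{th:qe}), i.e.\ $(sD)^{O(n^2)}$ here. Likewise, directly deciding the inclusion $\mathscr{S}\subseteq\lim_\eps W_\eps$ is a two-block sentence (an existential over $\x\in\mathscr{S}$ followed by a universal expressing ``$\x$ is not a limit''), again $(sD)^{O(n^2)}$. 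So steps~(iii) and~(v) as stated do not stay inside the claimed budget.

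The paper avoids this by never computing the limit sets as $n$-variate objects. Rather than comparing a limit polar variety with $\mathscr{S}$, it compares two \emph{consecutive} limit polar varieties and asks whether they differ (Corollary~\ref{cor:Wi-diff}). The subroutine \func{DisjointPolar} encodes ``do the two sets differ after taking limits?'' via a single auxiliary distance variable $Z$: it forms the set $A_\eps$ of triples $(\x,\y,z)$ with $\x$ in one critical locus, $\y$ in the other, and $\|\x-\y\|^2>z^2>0$; eliminates the $2n$ variables $(\X,\Y)$ by one-block QE with only $Z$ and the infinitesimals $\zeta,\eta,\eps$ as free parameters --- so $t=O(1)$ and the cost is genuinely $(sD)^{O(n)}$ --- and only then takes a \emph{univariate} limit in $\eps$ and checks whether some $z>0$ survives. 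This reduction to a constant number of free variables before any limit is taken is the algorithmic point that drives the complexity bound, and it is absent from your proposal.
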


To the best of our knowledge this is the first algorithm for computing
the real dimension of a real algebraic or semi-algebraic set within
this complexity bound.

\paragraph*{Organization of the paper.}
The rest of the paper is structured as follows: In the next Section we
present the necessary preliminaries from real algebraic geometry.  In
Section~\ref{sec:generic} we present the genericity properties that
the semi-algebraic set under consideration should satisfy. We prove
that a semi-algebraic set can satisfy these properties if we apply a
random linear change of coordinates.  Section~\ref{sec:geom-stmt}
presents the geometric result that is the crux of the matter of
our algorithm.  Finally, in Section~\ref{sec:Algo} we present the
algorithm \func{ComputeRealDimension} for computing the real
dimension, its various subroutines, the proof of correctness and the
complexity analysis.

\section{Preliminaries}
\label{sec:prelim}

In this section, we introduce some basic notions and some notations
that are used throughout the paper. 

As sketched in the introduction we will introduce some infinitesimals
to deform real algebraic sets. This will lead us to consider various
ground fields and semi-algebraic sets defined over these fields. We
refer the reader to \cite[Chapter 2]{BaPoRo06} for a more detailed
exposition of these notions on real fields, real closed fields,
infinitesimals and semi-algebraic sets. We will also use basic notions
coming from algebraic geometry since we will use the knowledge of the
dimension of some algebraic sets to deduce the real dimension of
semi-algebraic sets under study. For a more detailed exposition of
these notions, we refer the reader to \cite[Chapter 1]{Shafarevich77}.
The section finishes with some notions on critical points and polar
varieties that are extensively used in the sequel.

\paragraph*{Ground fields.}
Let $\QQ$ be a real field, $\RR$ be a real closed field and $\CC$ be
the algebraic closure of $\RR$. We consider a field $\K$ containing
$\QQ$ (e.g. $\RR$ or $\CC$) and let $\varepsilon$ be an infinitesimal.
In the sequel, $\K\langle\varepsilon\rangle$ stands for the Puiseux
series field.
We say that $z=\sum_{i\geq i_0}a_i\varepsilon^{i/q}\in
\K\langle \varepsilon\rangle$ is {\it bounded over $\K$} 
if and only if $i_0\geq 0$. 
We say that $\z=(z_1, \ldots, z_n)\in \Ke^n$ 
is {\it bounded over $\K$} if each $z_i$ is bounded over~$\K$.  
Given a bounded element $z\in \Ke$, 
we denote by $\limeps z$ the number $a_0 \in \K$. 
Given a bounded element $\z\in \Ke^n$, 
we denote by $\limeps \z$
the point $(\limeps(z_1), \ldots,  \limeps(z_n))\in \K^n$.
Given a subset $A\subset \Ke^n$, we denote by $\limeps(A)$ the set
$\{\limeps(z)\mid z\in A\mbox { and } z \mbox{ is bounded}\}.$
Given a semi-algebraic (resp. constructible) set $A\subset \RR^n$
(resp. $A\subset \CC^n$) defined by a quantifier-free formula $\Phi$
with polynomials in $\RR[X_1, \ldots, X_n]$, we denote by $\ext(A,
\RR\langle \varepsilon\rangle)$ (resp. $\ext(A, \CC\langle
\varepsilon\rangle)$) the set of solutions of $\Phi$ in $\RR\langle
\varepsilon\rangle^n$ (resp. $\CC\langle \varepsilon\rangle^n$).

In the sequel, we will work with $n$-variate polynomials with
coefficients in $\QQ$, $\QQ[\zeta]$ and $\QQ[\eps, \zeta]$ or $\eps$
and $\zeta$ are infinitesimals with $0< \eps< \zeta$. Sign conditions
on finite families of polynomials with coefficients in $\QQ$ define
semi-algebraic sets in $\RR^n$, those with coefficients in
$\QQ[\zeta]$ (resp. $\QQ[\eps, \zeta]$) define semi-algebraic sets in
$\RR\langle\zeta\rangle^n$ (resp. $\RR\langle\zeta\rangle
\langle\eps\rangle^n$).


\paragraph*{Basic definitions on algebraic sets.}
Let $\bar{\K}$ stand for an algebraic closure of $\K$.  We consider
{\em algebraic sets} in ${\bar \K}^n$ defined by polynomial
equations with coefficients in $\K$. A Zariski open set is a set whose
complementary is an algebraic set. A {\em constructible set} in ${\bar
  \K}^n$ is the set of common solutions of a system of a finite number of
$n$-variate polynomial equations and inequalities with coefficients in
$\K$.

Let $V\subset {\bar \K}^n$ be an algebraic set defined by polynomial
equations in $\K[X_1, \ldots, X_n]$.

We will consider the dimension of $V$, referring to
its {\em Krull dimension} (see e.g. \cite{Eisenbud95}). This notion of
dimension coincides with other notions coming from differential or
algebraic geometry (see e.g. \cite[Part II]{Eisenbud95}). Roughly
speaking, it is the number of {\em generic} hyperplanes such that
their intersection with $V$ is a finite set of points. The Krull
dimension of a constructible set is the Krull dimension of its
Zariski closure. If $W$ is another algebraic set and $V\subset W$,
then the Krull dimension of $V$ is less than or equal to the Krull
dimension of $W$.

The algebraic set $V$ is said to be irreducible if it cannot be
decomposed as the union of two algebraic sets defined by polynomial
equations with coefficients in $\K$. If $V$ is not irreducible it can
be uniquely decomposed as a finite union of irreducible algebraic
sets; these sets are called the irreducible components of $V$.

When all the irreducible components of $V$ have the same Krull
dimension, we say that $V$ is {\em equidimensional}. The ideal
associated to $V$ is the set of polynomials with coefficients in $\K$
which vanish on $V$. There exists a finite family of polynomials which
generate it in $\K[X_1, \ldots, X_n]$; let us denote it by $f_1,
\ldots, f_p$.

Assume that $V$ is equidimensional of Krull dimension $d$. A point
$\x\in V$ is called {\em regular} (or {\em smooth}) if the Jacobian matrix
$$
\left [\begin{array}{ccc}
  \frac{\partial f_1}{\partial X_1} & \cdots &   \frac{\partial f_1}{\partial X_n}\\ 
\vdots & & \vdots\\
  \frac{\partial f_p}{\partial X_1} & \cdots &   \frac{\partial f_p}{\partial X_n}\\ 
\end{array}\right ]
$$
has rank $n-d$ at $\x$. The kernel of the above Jacobian matrix at
$\x$ is the tangent space to $V$ at $\x$; we denote it by $T_\x V$.
Points in $V$ that are not regular are said to be {\em singular}. An
algebraic set with no singular points is smooth.

\paragraph*{Semi-algebraic sets and algebraic sets.} 
Let $\CC$ be the algebraic closure of $\RR$, and ${S}\subset \RR^n$ be
a semi-algebraic set. The smallest algebraic set containing ${\cal S}$
is called the Zariski closure of ${S}$. It is well-known that the
Krull dimension of the Zariski closure of ${S}$ equals the real
dimension of ${S}$ (see e.g. \cite[Proposition 2.8.2]{BoCoRo98}). In
particular if $W$ is an algebraic set that contains ${S}$, one can
conclude that the real dimension of ${S}$ is less than or equal to the
Krull dimension of $W$.

Let ${S}\subset \RR^n$ and $S'\subset \RR^n$. Consider a
semi-algebraic map $\varphi: S\to S'$ and $\RR'$ be a real closed
field containing $\RR$. We will consider the extension of $\varphi$ to
$\RR'$, denoted by $\ext(\varphi, \RR')$, as the semi-algebraic
function $\ext(S, \RR')\to \ext(S', \RR')$ whose graph is the
extension of the graph of $\varphi $ to $S'$.

If $S\subset S'$, then by Definition \ref{def:dim} the
real dimension of $S$ is less than or equal to the real dimension of
$S'$.

If $\x\in S$, we say that $\x$ is a {\em smooth point}
of $S$ if it is smooth in the Zariski closure of $S$; a semi-algebraic
set is said to be smooth if it is contained in the set of regular
points of its Zariski closure (this is direct a consequence of
\cite[Definition 3.3.4]{BoCoRo98}).

We also consider canonical projections $\pi_i: (x_1, \ldots,
x_n)\mapsto (x_1, \ldots, x_i)$.

\paragraph*{Change of variables.} 
Let $\K$ be a field containing $\QQ$ and $\bar \K$ be its algebraic
closure. Consider $f\in \K[X_1, \ldots, X_n]$ and $V\subset
{\bar\K}^n$ be the set of roots of $f$ in $\bar\K$ and $\mA\in
\GL_n(\QQ)$. We denote by $f^\mA$ the polynomial $f(\mA\X)$ and by
$V^\mA\subset {\bar\K}^n$. In other words, $V^\mA$ is the image of $V$
by the map $\x\to \mA^{-1}\x$.

Similarly, we will also consider change of variables on
semi-algebraic sets. If $S$ is a semi-algebraic set in $\RR^n$ and
$\mA\in \GL_n(\QQ)$, then $S^\mA$ denotes the image of $S$ by the map
$\x\to \mA^{-1}\x$.

\paragraph*{Critical points and polar varieties.}
Let $f_1, \ldots, f_p$ be polynomials in $\K[X_1, \ldots, X_n]$ and
$V\subset \CC^n$ be the algebraic set defined by
$f_1=\cdots=f_p=0$. For $1\leq i\leq n-p$, we define the {\em polar variety}
$\crit(\pi_i, V)$ as the set of points in $V$ at which all $p$-minors
of the truncated Jacobian matrix
$$
\left [\begin{array}{ccc}
  \frac{\partial f_1}{\partial X_{i+1}} & \cdots &   \frac{\partial f_1}{\partial X_n}\\ 
\vdots & & \vdots\\
  \frac{\partial f_p}{\partial X_{i+1}} & \cdots &   \frac{\partial f_p}{\partial X_n}\\ 
\end{array}\right ]
$$
vanish. By convention for $i\geq n-p+1$, we set $\crit(\pi_i, V)=V$
and for $i=0$ we set $\crit(\pi_i, V)=\emptyset$.

We refer to \cite{BGHSS} for a detailed study of polar
varieties. Below, we recall some basic results that are used in the paper.

Given a polynomial $f\in \K[X_1, \ldots, X_n]$ and $V\subset
{\bar\K}^n$ defined by $f=0$, we denote by $\crit(\pi_i, V)\subset
{\bar\K}^n$ defined by
$$
f= \frac{\partial f}{\partial X_{i+1}}=\cdots=\frac{\partial f}{\partial X_{n}}=0
\enspace.
$$
Assuming $f$ is square-free, the set of singular points of $V$ is
defined by the vanishing of $F$ and all its partial derivatives. The
set $\crit(\pi_i, V)$ is called {\em polar variety associated to
$\pi_i$}. When $V$ is smooth, $\crit(\pi_i, V)$ is the set of {\em
  critical points} of the restriction of $\pi_i$ to $V$ (i.e. the set
of regular points $\x\in V$ such that $\pi_i(T_\x V)$ has dimension
$\leq i-1$).  When $V$ is not smooth $\crit(\pi_i, V)$ is the union of
the singular points of $V$ and the critical points of the restriction
of $\pi_i$ to $V$.

Let us also mention that for $i=1$, $\crit(\pi_1, V)$ contains the local
minimizers and maximizers of the restriction of $\pi_1$ to
$V\cap\RR^n$.


\section{Genericity properties}
\label{sec:generic}

In this paper, a property is called generic (in some suitable
parameter space) if it holds in a non-empty Zariski open subset of the
parameter space under consideration.

\begin{definition}\label{def:generic-position}
  Let $f\in \QQ(\zeta)[X_1, \ldots, X_n]$, $V\subset \CCz^n$ be the
  algebraic variety defined by $f=0$, and $V_\eps \subset
  \CCz\langle\eps\rangle^n$ be the algebraic variety defined by
  $f-\eps=0$.  We say that $f$ satisfies property $\sf N$ if the
  following conditions hold:
  
  \begin{enumerate}[${\sf N}_1$]
  \item for all $1\leq i \leq n$, $\crit(\pi_i, V_{\eps})$ is
    either empty or is smooth and equidimensional of Krull dimension
    $i-1$;
  \item for all $\x\in V\cap \RRz^n$, $\pi_d^{-1}(\pi_d(\x))\cap (V\cap
    \RRz^n)$ is finite where $d$ is larger than or equal to the
    real dimension of $V$ at $\x$.
  \end{enumerate}
\end{definition}

Note that in the above definition, we consider a polynomial with
coefficients in $\QQ(\zeta)$. We state below that for a generic choice
of an $n\times n$ invertible matrix $\mA$ with entries in $\QQ$,
$f^\mA$ satisfies $\sf N$.

Indeed we need such a statement because our algorithm performs
symbolic manipulations on the input by introducing an infinitesimal
$\zeta$ (to reduce the study to bounded semi-algebraic sets) and next
chooses randomly $\mA$ to ensure that after applying the change of
variables $\x\to \mA^{-1}\x$ some polynomial satisfies $\sf N$.

Thus, the rest of this Section is devoted to prove that up to a generic
choice of $\mA\in \GL_n(\QQ)$, $f^\mA$ satisfies $\sf N$.

\begin{proposition}\label{prop:normalisgeneric}
  There exists a non-empty Zariski open set $\mathscr{O}\subset
  \GL_n(\CC)$ such that for $\mA\in \mathscr{O}\cap \GL_n(\QQ)$,
  $f^\mA$ satisfies $\func{N}$.
\end{proposition}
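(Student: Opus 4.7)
The plan is to decompose property $\sf N$ into its two clauses ${\sf N}_1$ and ${\sf N}_2$, prove that each defines a nonempty Zariski-open subset of $\GL_n(\CC)$, and take their intersection $\mathscr{O}$; since $\GL_n(\QQ)$ is Zariski-dense in $\GL_n(\CC)$, $\mathscr{O}\cap\GL_n(\QQ)$ is nonempty.

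For ${\sf N}_1$: First note that $V_\eps$ is smooth. Indeed, the critical values of $f$ form a finite subset of the algebraic closure of $\QQ(\zeta)$, whereas $\eps$ is transcendental over $\CCz$, so $\eps$ cannot be a critical value of $f$, and $V_\eps = \{f - \eps = 0\}$ has nowhere vanishing gradient. Hence $V_\eps$ is smooth and equidimensional of dimension $n-1$. For each $i \in \{1,\ldots,n\}$, a Bertini-type transversality argument applied to the incidence variety
\[
\mathcal{I}_i \;=\; \bigl\{(\x,\mA)\in V_\eps\times\GL_n(\CC) \,:\, \x\in\crit(\pi_i,V_\eps^\mA)\bigr\}
\]
yields a nonempty Zariski-open $\mathscr{O}_i\subseteq\GL_n(\CC)$ on which $\crit(\pi_i,V_\eps^\mA)$ is either empty or smooth and equidimensional of dimension $i-1$; this is essentially the content of the polar variety theorems in \cite{BGHSS,BaGiHePa05}. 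Set $\mathscr{O}^{(1)}=\bigcap_{i=1}^n \mathscr{O}_i$.

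For ${\sf N}_2$: Let $W\subseteq V$ be the Zariski closure in $\CCz^n$ of $V\cap\RRz^n$, decomposed into irreducible components $W = W_1\cup\cdots\cup W_r$ with Krull dimensions $d_j=\dim W_j$. By \cite[Prop.~2.8.2]{BoCoRo98} the local real dimension of $V$ at $\x\in V\cap\RRz^n$ equals $\max\{d_j : \x\in W_j\}$. Noether normalization provides, for each $j$, a nonempty Zariski-open $\mathscr{N}_j\subseteq\GL_n(\CC)$ such that $\pi_{d_j}|_{W_j^\mA}$ is a finite morphism for $\mA\in\mathscr{N}_j$, and for any $d\geq d_j$ the fibers of $\pi_d|_{W_j^\mA}$ are contained in fibers of $\pi_{d_j}|_{W_j^\mA}$, hence finite. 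Setting $\mathscr{O}^{(2)}=\bigcap_j\mathscr{N}_j$ and $\mathscr{O}=\mathscr{O}^{(1)}\cap\mathscr{O}^{(2)}$, for $\mA\in\mathscr{O}$, $\x\in V^\mA\cap\RRz^n$ and $d$ at least the real dimension of $V^\mA$ at $\x$ one obtains
\[
\pi_d^{-1}(\pi_d(\x))\cap(V^\mA\cap\RRz^n) \;\subseteq\; \bigcup_{j}\bigl(\pi_d^{-1}(\pi_d(\x))\cap W_j^\mA\bigr),
\]
a finite union of finite sets.

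The main technical obstacle is ${\sf N}_1$: each individual Zariski-open condition $\mathscr{O}_i$ is classical, but asserting smoothness and equidimensionality of the expected dimension simultaneously for all $i=1,\ldots,n$ requires the detailed polar variety analysis of \cite{BGHSS,BaGiHePa05}, which I would invoke as a black box. The argument for ${\sf N}_2$ is more routine, reducing to generic Noether normalization on the irreducible components of the Zariski closure of $V\cap\RRz^n$ together with the real/Krull dimension identification of \cite{BoCoRo98}.
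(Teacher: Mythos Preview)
Your overall strategy matches the paper's---split into ${\sf N}_1$ and ${\sf N}_2$, invoke polar-variety genericity for the first and Noether normalization for the second---but you skip the step that constitutes most of the paper's argument. The genericity theorems you cite produce a Zariski-open set of matrices \emph{over the field containing the coefficients of the polynomial}; here the polynomial is $f-\eps\in\QQ(\zeta,\eps)[\X]$, so the open set $\Omega'$ lies a priori in $\GL_n(\CCze)$, not in $\GL_n(\CC)$. (Your incidence variety $\mathcal{I}_i\subset V_\eps\times\GL_n(\CC)$ already conflates the two base fields.) The paper fills this gap by observing that the bad locus is cut out by a polynomial $P$ in the matrix entries with coefficients in $\QQ[\eps,\zeta]$; expanding $P$ in $\eps,\zeta$, some coefficient $P_0$ is a nonzero polynomial over $\QQ$, and the open set $\{P_0\neq 0\}\subset\GL_n(\CC)$ does the job, since for $\mA\in\GL_n(\QQ)$ with $P_0(\mA)\neq 0$ the element $P(\mA)$ is nonzero in $\QQ[\eps,\zeta]$ and hence nonzero in $\CCze$. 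The same descent is needed for ${\sf N}_2$, where the coefficients lie in $\QQ(\zeta)$.

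There is also a factual error in your ${\sf N}_2$ argument: the claim that the local real dimension at $\x$ equals $\max\{d_j:\x\in W_j\}$, with $W_j$ the irreducible components of the Zariski closure of $V\cap\RRz^n$, is false. The Whitney umbrella $x^2=y^2z$ is irreducible over $\CC$ of dimension $2$, yet along the negative $z$-axis its real locus has local real dimension $1$. The paper instead partitions $V\cap\RRz^n$ into smooth semi-algebraic strata $\mathcal{S}_i$ and uses the formula $\max\{d_i:\x\in\overline{\mathcal{S}_i}\}$, for which the analogous identification is correct.
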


\begin{proof}
  We will prove that there exists a non-empty Zariski open set
  $\mathscr{O}'\subset \GL_n(\CC)$ (resp.  $\mathscr{O}''\subset
  \GL_n(\CC)$) such that for $\mA\in \mathscr{O}'\cap\GL_n(\QQ)$,
  $f^\mA$ satisfies $\sf N_1$ (resp. $\sf N_2$). Taking
  $\mathscr{O}=\mathscr{O}'\cap\mathscr{O}''$ 
  is sufficient to conclude.

  We start with ${\sf N}_1$. \cite[Proposition 3]{BaGiHeMb97} (see
  also \cite[Theorem 6]{BGHSS} for a more general statement) states
  that when $f$ has coefficients in $\QQ$ and defines a smooth
  algebraic set $V\subset\CC^n$, there exists a non-empty Zariski open
  set $\Omega'\subset \GL_n(\CC)$ such that for all $\mA\in
  \GL_n(\QQ)\cap\Omega'$ and $1\leq i \leq n$, $\crit(\pi_i, V^\mA)$ is
  either empty or is equidimensional of Krull dimension $i-1$. The
  proof of this result is based on the use of the Weak Transversality
  Theorem of Thom-Sard (see e.g. \cite{golubitsky1973}). It allows to
  characterize the set of ``bad'' matrices, i.e. the complement of
  $\Omega'$ in $\GL_n(\CC)$ as the smallest algebraic set containing
  the critical values of a polynomial mapping with coefficients in the
  same base field as the one containing the coefficients of $f$.

  By \cite[Lemma 3.5]{RRS}, $V_\varepsilon$ is smooth. Thus, one can
  apply {\em mutatis mutandis} the proof of \cite[Proposition
  3]{BaGiHeMb97} to $f-\eps$ with $\CCze$ as a base field. We obtain
  the existence of a non-empty Zariski open set $\Omega'\subset
  \GL_n(\CCze)$ such that for $\mA\in \Omega'$ and $1\leq i \leq n$,
  $\crit(\pi_i, V_\varepsilon^\mA)$ is either empty or is
  equidimensional of Krull dimension $i-1$. Recall that
  $\GL_n(\CCze)-\Omega'$ is Zariski closed and is characterized as the
  set of critical values of a polynomial mapping in $\QQ(\eps, \zeta)$
  since $f-\eps$ has coefficients in $\QQ(\eps, \zeta)$.  If we
  multiply this polynomial by the least common multiple of the
  denominators of its coefficients, we obtain a polynomial $P$ with
  coefficients in $ \QQ[\eps, \zeta]$. Without loss of generality, we
  can assume that the coefficients of $P$ have no non-trivial gcd.
  Hence, $P$ can be written as
  \[
  P=P_0+\eps^{v_\eps} Q_\eps +\zeta^{v_\zeta} Q_\zeta+\eps\zeta Q
  \enspace ,
  \]
  where $v_\eps$ and $v_\zeta$ are positive integers and 
  \begin{itemize}
  \item $P_0$ has coefficients in $\QQ$ (it
    is obtained by substituting $\eps$ and $\zeta$ by $0$ in $P$);
  \item $Q_\eps$ (resp. $Q_\zeta$) has coefficients in $\QQ[\eps]$
    (resp. $\QQ[\zeta]$) and is not identically $0$ when $\eps$
    (resp. $\zeta$) is substituted to $0$;
  \item $Q$ has coefficients in $\QQ[\eps, \zeta]$. 
  \end{itemize}
  Note that since the coefficients of $P$ have no non-trivial gcd, at
  least one of the three polynomials $P_0, Q_\eps$ and $Q_\zeta$ are not
  identically $0$. If $P_0\neq 0$ (resp. $Q_\eps\neq 0$ or $Q_\zeta\neq
  0$), we define $\mathscr{O}'\subset \GL_n(\CC)$ as the non-empty
  Zariski defined by $P_0\neq 0$ (resp. $Q_\eps\neq 0$ or $Q_\zeta\neq
  0$).

  Now remark that since $\eps$ and $\zeta$ are infinitesimals, for all
  $\mathscr{O'}\cap \GL_n(\QQ)\subset \Omega'$; in other words for all
  $\mA\in \mathscr{O}'\cap\GL_n(\QQ)$, $f^\mA$ satisfies $\sf N_1$ as
  requested.
  
  \medskip Now we deal with $\sf N_2$. Below, by abuse of notations
  the extensions of cartesian products $]0, 1[^i$ in $\RRz$ are
  denoted by $]0, 1[$; also by convention $]0, 1[^0=\{0\}$.  The set
  $V \inter \RRz^n$ is semi-algebraic and so we can partition it into
  smooth semi-algebraic sets $\mathcal{S}_1, \ldots, \mathcal{S}_\ell$
  \cite[Chapter 5, Section 5]{BaPoRo06}, and homeomorphic to $]0,
  1[^{d_1}, \ldots, ]0, 1[^{d_\ell}$, respectively, where $0\leq k
  \leq n$. By \cite[pp. 47]{BCR}, for $\x\in V\inter\RRz^n$, the local
  real dimension of $V\inter \RRz^n$ at $\x$ is given by $\max _{\x\in
    \overline{\mathcal{S}_i}}d_i$, where $\overline{\mathcal{S}_i}$
  denotes the Euclidean closure of ${\mathcal{S}_i}$.

  For $1\leq i \leq \ell$, we denote by $V_i$ the Zariski closure of
  $\mathcal{S}_i$.  By \cite[Proposition 2.8.2]{BoCoRo98} note that
  the Krull dimension of $V_i$ is $d_i$, for $1\leq i \leq \ell$.  By
  Noether normalization \cite{am-book}, there exists a non-empty
  Zariski open set $\Omega''_i\subset \GL_n(\CCze)$ such that for all
  $\mA\in \Omega''_i$ and $\x\in \CCz^{d_i}$, $\pi_{d_i}^{-1}(\x)\cap
  V_i^{\mA}$ is finite. As a consequence, for all $\x\in \RRz^{d_i}$,
  $\pi_{d_i}^{-1}(\x)\cap \mathcal{S}_i$ is finite. We let
  $\Omega''=\inter_{i=1}^\ell \Omega''_i$.  The complement of
  $\Omega''$ in $\GL_n(\CCze)$ can be characterized using algebraic
  elimination algorithms (such as Gr\"obner bases) run with parameters
  as for entries of a generic matrix; we refer to \cite{l-cpnnl-1989}
  for such algebraic algorithms for detecting defects of Noether
  normalization. As above, since the input has coefficients in
  $\QQ(\zeta)$, $\Omega''$ can be defined by a polynomial inequality
  $P\neq 0$ with coefficients in $\QQ[\eps, \zeta]$: Following {\em
    mutatis mutandis} the approach for proving ${\sf N}_1$, one can
  deduce from the inequality $P\neq 0$ another inequality with
  coefficients in $\QQ$ defining a non-empty Zariski open set
  $\mathscr{O}''$ such that $\mathscr{O}''\cap\GL_n(\QQ)\subset
  \Omega''$. This finishes the proof.
\end{proof}


\section{Geometric Statement}
\label{sec:geom-stmt}

In this Section we let $f$ and $g_1, \ldots, g_s$ be polynomials in
$\ZZ[X_1, \ldots, X_n]$, $S\subset \RR^n$ be the semi-algebraic set defined by 
$$f=0, \quad g_1>0, \ldots, g_s>0$$ 
and $U\subset \RR^n$ be the open semi-algebraic set defined by $g_1>0,
\ldots, g_s>0$. We will also consider the algebraic set $V\subset
\CC^n$ defined by $f=0$.  Given an infinitesimal $\eps$, we denote by
$V_{\eps}\subset \CCe^n$ the algebraic set defined by $f=\eps$.

The rest of this Section is devoted to prove this result below.

\begin{proposition}\label{prop:criterion:dimension}
  Assume that $f$ satisfies $\func{N}$, is non-negative over $\RR^n$,
  and that $V\cap\RR^n$ is bounded.  Then for $0\leq i \leq n$,
  $\left (\limeps\crit(\pi_i, V_{\eps})\right ) \inter U=S$ if and
  only if the real dimension of $S$ is $\leq i-1$.
\end{proposition}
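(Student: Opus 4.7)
I will prove the equivalence by handling the two directions separately.

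For $(\Rightarrow)$, I would rely on property ${\sf N}_1$ and a dimension count. Assume $\bigl(\limeps \crit(\pi_i, V_\eps)\bigr) \cap U = S$. By ${\sf N}_1$, the polar variety $\crit(\pi_i, V_\eps)$ is either empty (whence $S = \emptyset$ and $\dim S = -1$) or equidimensional of Krull dimension $i-1$. The specialization $\limeps$ (a projection-type operation sending the infinitesimal parameter $\eps$ to $0$) cannot increase Krull dimension, nor can intersection with the open set $U$. Hence the assumed equality yields $\dim S \leq i-1$.

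For $(\Leftarrow)$, I would first observe that the inclusion $\bigl(\limeps \crit(\pi_i, V_\eps)\bigr) \cap U \subseteq S$ is immediate: if $x = \limeps z$ for bounded $z \in \crit(\pi_i, V_\eps) \subseteq V_\eps$ and $x \in U$, then $f(z) = \eps$ gives $f(x) = 0$, and $x \in U$ supplies the strict inequalities $g_j(x) > 0$; hence $x \in S$. The crux is the reverse inclusion $S \subseteq \limeps \crit(\pi_i, V_\eps)$. Fix $x \in S$; the local real dimension $d_x := \dim_x V \cap \RR^n$ satisfies $d_x \leq \dim S \leq i-1$, so property ${\sf N}_2$ applied with $d = i \geq d_x$ ensures that the fiber $\pi_i^{-1}(\pi_i(x)) \cap V \cap \RR^n$ is finite. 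I would then choose a closed Euclidean ball $\bar B = \bar B(x, r) \subseteq U$ with $\bar B \cap V \cap \pi_i^{-1}(\pi_i(x)) = \{x\}$.

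To produce a critical point of $\pi_i|_{V_\eps}$ with limit $x$, I would apply a Lagrange multiplier argument to $\phi(z) = \sum_{k=1}^i (z_k - x_k)^2$ on the bounded semi-algebraic set $V_\eps \cap \ext(\bar B, \RRe)$. Since $\nabla \phi$ has vanishing components in positions $i+1, \ldots, n$, any critical point $z_\eps$ of $\phi|_{V_\eps}$ with nonvanishing Lagrange multiplier $\lambda$ satisfies $\partial_{i+1} f(z_\eps) = \cdots = \partial_n f(z_\eps) = 0$ and thus lies in $\crit(\pi_i, V_\eps)$. A boundary/interior comparison produces an interior minimizer: on the compact set $V \cap \partial B$, $\phi$ admits a positive lower bound $c > 0$ by the choice of $\bar B$, and this bound transfers to $V_\eps \cap \partial B$ (up to a constant) by continuity of the $\eps$-deformation, while $V_\eps$ passes infinitesimally close to $x$, so $\phi$ attains infinitesimal values in the interior. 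Once an interior critical point $z_\eps$ is found with $\phi(z_\eps)$ infinitesimal, its limit $\limeps z_\eps$ is forced into $\bar B \cap V \cap \pi_i^{-1}(\pi_i(x)) = \{x\}$ and therefore equals $x$.

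The main technical obstacle is the degenerate case $\lambda = 0$: since $f$ restricted to the transverse fiber $\pi_i^{-1}(\pi_i(x))$ is non-negative with a zero at $x$, the set $V_\eps \cap \pi_i^{-1}(\pi_i(x)) \cap \bar B$ can be nonempty and realize the minimum $\phi = 0$ without giving a critical point of $\pi_i|_{V_\eps}$. I would circumvent this by introducing a secondary infinitesimal $\eta$ dominating $\eps$ together with a generic rational direction $a \in \QQ^i$, and repeating the argument with the shifted objective $\tilde\phi(z) = \sum_{k=1}^i (z_k - x_k - \eta a_k)^2$. Since $\pi_i(V \cap \bar B \cap \RR^n)$ has real dimension at most $d_x < i$ by ${\sf N}_2$, a generic $a$ guarantees that $\pi_i(x) + \eta a$ lies outside $\pi_i(V \cap \bar B)$, so the new minimizer has $\tilde\phi > 0$ and hence $\lambda \neq 0$, producing a bona fide point of $\crit(\pi_i, V_\eps)$. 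Passing to the limit $\eta \to 0$ and invoking compactness of the bounded real part of $\crit(\pi_i, V_\eps)$ inside $\bar B$ yields a critical point $z_\eps$ with $\limeps z_\eps = x$, completing the proof.
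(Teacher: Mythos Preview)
Your forward direction matches the paper. For the reverse inclusion $S\subseteq\limeps\crit(\pi_i,V_\eps)$, the paper takes a cleaner route that avoids your degenerate-multiplier problem entirely: it \emph{slices} by the first $i-1$ coordinates. Using ${\sf N}_2$ with $d=i-1$ (not $d=i$), the fiber $\pi_{i-1}^{-1}(\pi_{i-1}(\x))\cap V\cap\RR^n$ is finite, so the substituted polynomial $\tilde f(X_i,\dots,X_n):=f(\alpha_1,\dots,\alpha_{i-1},X_i,\dots,X_n)$ has $(\alpha_i,\dots,\alpha_n)$ as an isolated real zero. The level set $\{\tilde f=\eps\}$ then has a closed bounded component near that point, and the extremum of the single coordinate $X_i$ on it furnishes $\tilde\x_\eps$ with $\partial_{i+1}\tilde f=\dots=\partial_n\tilde f=0$ and $\limeps\tilde\x_\eps=(\alpha_i,\dots,\alpha_n)$. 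Prepending the fixed coordinates gives $\x_\eps\in\crit(\pi_i,V_\eps)$ with $\limeps\x_\eps=\x$ (Lemmas~\ref{lemma:locallyfinite} and~\ref{lemma:pointscritiquesetfibres}). No Lagrange degeneracy can occur because a single-coordinate extremum on a smooth closed bounded hypersurface is always a genuine critical point.

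Your optimization approach is a legitimate alternative, but two steps are under-justified. First, from ``$\pi_i(\x)+\eta a\notin\pi_i(V\cap\bar B)$'' you infer $\tilde\phi>0$ at the minimizer; but the minimizer sits on $V_\eps$, not $V$, and you must still argue that $\pi_i(\x)+\eta a\notin\pi_i(V_\eps\cap\bar B)$ (easy via $\limeps$, but unstated). Second, ``passing to the limit $\eta\to 0$ and invoking compactness'' is problematic as written: over a non-archimedean real closed field, closed bounded sets are not sequentially compact, and if $\eta$ is a genuine infinitesimal dominating $\eps$ you cannot let $\eta\to 0$ while keeping $\eps$ alive. A correct version treats $\eta$ as a real parameter, observes that the resulting points $\limeps z_\eps(\eta)$ lie in the closed bounded set $W:=(\limeps\crit(\pi_i,V_\eps))\cap\bar B$ with $\pi_i$-image within $2\eta|a|$ of $\pi_i(\x)$, and concludes $\pi_i(\x)\in\pi_i(W)$ because $\pi_i(W)$ is closed; your choice of $\bar B$ then forces $\x\in W$. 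The paper's slicing trick buys you freedom from this entire detour.
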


\begin{remark}
  \cite[Proposition 3.4]{bb-dcg-I-2012} provides a similar statement
  but with different assumptions on $f$ that are not suitable for
  our setting.
\end{remark}

Before proving the above result, we start with a few lemmata.

\subsection{Auxiliary results}
\label{sec:aux-lem}

The following proposition is a variant of statements that are commonly used
in computational real algebraic geometry (see
e.g. \cite{br-radii-2010, RRS} or \cite[Proposition 12.38]{BaPoRo06}).

We let $\zeta$ be an infinitesimal. 

\begin{proposition}\label{prop:comp-deform}
  Assume that $f$ is non-negative over $\RR^n$ and that $V\cap\RR^n$ is
  non-empty. Let $C$ be a semi-algebraically connected component of
  $V\cap\RR^n$.  Then there exist semi-algebraically connected
  components $C_{\eps,1}, \ldots, C_{\eps,\ell}$ of $V_{\eps} \inter
  \RRe^n$, such that $C=\union_{i=1}^\ell \limeps C_{\eps,i}$.

  Moreover if there exists a ball $B\subset \RR^n$ such that $C\subset
  B$ and $C$ does not intersect the boundary of $B$, then $C_{\eps,
    i}\subset \ext(B, \RRe)$ and $C_{\eps, i}$ does not intersect the
  boundary of $\ext(B, \RRe)$, for $1\leq i \leq \ell$.
\end{proposition}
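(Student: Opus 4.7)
The plan is to isolate $C$ from the other semi-algebraically connected components of $V\cap\RR^n$ by a tubular neighborhood $T$ whose boundary avoids $V$, transfer this property to the extension so that $V_\eps$ cannot meet $\ext(\partial T,\RRe)$, and finally use curve selection to match the shadows of the components of $V_\eps\cap\RRe^n$ lying in $\ext(T,\RRe)$ with the points of $C$.

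First I would set up the tube. Since the hypothesis implies $f\not\equiv 0$, the algebraic set $V$ has empty Euclidean interior in $\RR^n$, and in particular no point of $V\cap\RR^n$ is interior to the level set $\{f=0\}$. The semi-algebraic set $V\cap\RR^n$ has finitely many semi-algebraically connected components, so there exists a real $\rho>0$ smaller than the distance between $C$ and the union of the other components (in the bounded case relevant to the paper; in general one uses a semi-algebraic variable-radius tube). Set $T=\{x\in\RR^n:d(x,C)^{2}<\rho^{2}\}$; then $\bar T\cap V=C$ and $\partial T\cap V=\emptyset$.

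Second, I claim $V_\eps\cap\ext(\partial T,\RRe)=\emptyset$. A point $p$ in this intersection satisfies the semi-algebraic condition $d(p,\ext(C,\RRe))^{2}=\rho^{2}$ with $\rho\in\RR$, so $p$ is bounded over $\RR$; taking shadows yields $d(\limeps p,C)^{2}=\rho^{2}$ and $f(\limeps p)=\limeps f(p)=\limeps\eps=0$, contradicting $V\cap\partial T=\emptyset$. Consequently, any semi-algebraically connected component $D$ of $V_\eps\cap\RRe^n$ lies either entirely in $\ext(T,\RRe)$ or entirely in the complement of $\ext(\bar T,\RRe)$. I denote by $C_{\eps,1},\ldots,C_{\eps,\ell}$ those lying in $\ext(T,\RRe)$. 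For each bounded $p\in C_{\eps,i}$, the shadow $\limeps p$ lies in $\bar T\cap V=C$, so $\limeps C_{\eps,i}\subseteq C$. Conversely, for $x\in C$, the curve selection lemma applied to the semi-algebraic set $\{(y,t)\in\RR^n\times\RR_{>0}:f(y)=t,\ \|y-x\|<\rho/2\}$ (whose Euclidean closure contains $(x,0)$ because $x$ is not in the interior of $V$ and $f\ge 0$) produces a continuous semi-algebraic arc $\gamma:[0,\delta)\to\RR^n$ with $\gamma(0)=x$ and $f(\gamma(t))=t$. Extending $\gamma$ to $\RRe$ and using the semi-algebraic intermediate value property, there is an infinitesimal parameter $t_\eps$ with $f(\gamma(t_\eps))=\eps$; the point $\gamma(t_\eps)\in V_\eps$ is bounded over $\RR$, lies in $\ext(T,\RRe)$ by the diameter constraint, and satisfies $\limeps\gamma(t_\eps)=x$. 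It therefore belongs to one of the $C_{\eps,i}$, giving $C=\bigcup_{i=1}^\ell\limeps C_{\eps,i}$.

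For the moreover part, the additional hypothesis that $C\subset B$ and $C\cap\partial B=\emptyset$ means $C$ is a positive real distance from $\partial B$; shrinking $\rho$ if necessary so that $\bar T\subset B\setminus\partial B$ forces $C_{\eps,i}\subset\ext(T,\RRe)\subset\ext(B,\RRe)$ while keeping $C_{\eps,i}$ disjoint from $\ext(\partial B,\RRe)$. The main obstacle I expect is the care required in the second paragraph: one must justify that the standard-part map commutes with the polynomial $f$ and with the semi-algebraic distance function used to define $\partial T$, and that a connected $D$ avoiding $\ext(\partial T,\RRe)$ indeed lies in one of the two open halves of $\RRe^n\setminus\ext(\partial T,\RRe)$. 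Both facts follow from the completeness of $\RRe$ under semi-algebraic connectivity and Tarski–Seidenberg transfer applied to the first-order statement that separates $T$ from its exterior, but they are the steps that need to be written explicitly rather than invoked as black boxes.
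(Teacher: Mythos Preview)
Your approach via a separating tube is genuinely different from the paper's, and it is attractive because the reverse inclusion $\limeps C_{\eps,i}\subset C$ becomes immediate once the tube is in place. However, the first assertion of the proposition does \emph{not} assume that $C$ (or $V\cap\RR^n$) is bounded, and your tube construction breaks down in that generality. Two distinct closed semi-algebraically connected components of $V\cap\RR^n$ may well be at Euclidean distance~$0$: for instance with $f(x,y)=y^{2}(xy-1)^{2}$, the $x$-axis and the first-quadrant branch of the hyperbola are components of $\{f=0\}$ whose distance is zero, so no real $\rho>0$ makes $\bar T\cap V=C$. Your parenthetical remark about a ``variable-radius tube'' acknowledges the issue but does not resolve it. Relatedly, the sentence ``$d(p,\ext(C,\RRe))^{2}=\rho^{2}$ with $\rho\in\RR$, so $p$ is bounded over $\RR$'' is simply false when $C$ is unbounded: being at a fixed real distance from an unbounded set does not bound~$p$, so you cannot take its shadow and the whole argument that $V_\eps$ avoids $\ext(\partial T,\RRe)$ collapses.

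The paper sidesteps this by never isolating $C$ geometrically. It \emph{defines} the $C_{\eps,i}$ as those components of $V_\eps\cap\RRe^n$ that contain a lift $\x_\eps$ of some $\x\in C$ produced by curve selection and the intermediate value theorem (this part is essentially your step~5). For the reverse inclusion it takes any bounded $\x_\eps\in C_{\eps,i}$, connects it inside $C_{\eps,i}$ by a semi-algebraic path to the witness point $\x'_\eps$ coming from $C$, and uses that the $\limeps$-image of a closed bounded semi-algebraically connected set is semi-algebraically connected; since this image lies in $V\cap\RR^n$ and meets $C$, it lies entirely in $C$. No separation of components is needed, so unboundedness causes no trouble.

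For the ``moreover'' clause, where $C$ \emph{is} bounded, your tube argument is valid and in fact tidier than the paper's proof, which argues by contradiction: assuming some $C_{\eps,i}$ reaches outside $\ext(B,\RRe)$, one finds a path crossing the boundary sphere, takes $\limeps$, and lands on $\partial B\cap C=\emptyset$. Your version---choosing $\rho$ from the outset smaller than both the inter-component distance and $d(C,\partial B)$---gives the same conclusion directly. So the proposal is fine once boundedness of $C$ is available, but as a proof of the full proposition it has a real gap that the paper's path-connectivity argument avoids.
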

\begin{proof}
  Consider $\x\in C$. By assumption, $f$ is non-negative over
  $\RR^n$. Since $f(\x)=0$ there exists a semi-algebraically connected
  component $\mathcal{S}\subset \RR^n$ of the semi-algebraic set defined by $f>0$
  such that $\x$ is in the closure of $\mathcal{S}$ (for the Euclidean
  topology). Then, for all $r>0$ the ball $B(\x, r)$ centered at $\x$
  of radius $r$ contains a point of $\mathcal{S}$ (at which $f$ is positive). By
  the curve selection Lemma \cite[Theorem 3.19]{BaPoRo06} there exists
  a continuous semi-algebraic function $\gamma: [0, 1]\to \mathcal{S}$ with
  $\gamma(0)=\x$ and $f(\gamma(t))>0$ for $t\neq 0$.

  Consider the extensions $\tilde \gamma=\ext(\gamma, \RRe)$ and
  $\tilde f=\ext(f, \RRe)$. By the semi-algebraic intermediate value
  Theorem \cite[Theorem 2.11]{BaPoRo06}, there exists $t_{\eps}\in
  \ext([0,1], \RRe)$ such that
  $\tilde f(\tilde \gamma(t_{\eps}))=\eps$. We denote
  $\tilde \gamma(t_{\eps})$ by $\x_\eps$ and we have
  $\limeps\x_\eps=\x$ because $\x_\eps\in \ext(B(\x, r), \RRe)$ for
  all $r>0$. Also, let $C_{\x_\eps}$ be the semi-algebraically
  connected component of $V_\eps\cap\RRe^n$ that contains $\x_\eps$;
  we associate to $\x$ this semi-algebraically connected component
  $C_{\x_\eps}$ of $V_\eps\cap\RRe^n$.

  Since there are finitely many semi-algebraically connected
  components of $V_\eps\cap\RRe^n$, there are finitely many
  semi-algebraically connected components of $C_{\eps, 1}, \ldots,
  C_{\eps, \ell}$ which are associated to a point $\x$ in $C$.
  This proves that  $C\subset \union_{i=1}^\ell \limeps C_{\eps,i}$.

  We prove now that $\union_{i=1}^\ell \limeps C_{\eps,i}\subset C$.
  Let $\x\in \limeps C_{\eps,i}$ for some $1\leq i \leq \ell$; we need
  to prove that $\x\in C$.  Then, there exists $\x_\eps\in C_{\eps,
    i}$ bounded over $\RR$ such that $\limeps\x_\eps=\x$. Now, recall
  that $C_{\eps, i}$ is associated to some point $\x'\in C$; this
  means that there exists $\x'_\eps\in C_{\eps, i}$ bounded over $\RR$
  such that $\limeps\x'_\eps=\x'$.  Since $C_{i, \eps}$ is
  semi-algebraically connected, there exists a continuous
  semi-algebraic function $\gamma:[0,1]\rightarrow C_{i,\eps}$ such
  that $\gamma(0)=\x_\eps$ and $\gamma(1)=\x'_\eps$; we have that
  $\Gamma=\gamma([0, 1])$ is a semi-algebraically connected
  semi-algebraic set. By \cite[Theorem 3.20]{BaPoRo06},
  $\Gamma=\gamma([0,1])$ is closed and bounded. By \cite[Proposition
  12.36]{BaPoRo06}, we conclude that $\limeps\Gamma$ is
  semi-algebraically connected. Since $\limeps$ is a ring
  homomorphism, $\limeps\Gamma$ is contained in $V\cap\RR^n$. Now
  notice that $\limeps(\gamma(0))=\x$ and that
  $\limeps(\gamma(1))=\x'$. Since we have proved that $\limeps\Gamma$
  is semi-algebraically connected we deduce that $\x\in C$ as
  requested.

  \medskip 

  Now, we assume that $C$ is bounded and let $B\subset \RR^n$ be a
  ball such that $C\subset B$ and $C$ does not intersect the boundary
  of $B$.  To conclude the proof it remains to prove that for $1\leq i
  \leq \ell$, $C_{\eps,i}\subset \ext(B, \RRe)$ and
  $C_{\eps,i}$ does not intersect the boundary of $\ext(B,\RRe)$.

  Consider $\x_\eps\in C_{\eps, i}$ bounded over $\RR$ such that
  $\limeps\x_\eps\in C$.  Such a point exists, as we argued in the
  first part of the proof that $\limeps C_{\eps, i}\subset C$.
  
  We argue by contradiction.  Assume that there exists $\x'_\eps$ in
  the Euclidean closure of $C_{\eps,i}-\ext(B, \RRe)$. Since $C_{\eps,
    i}$ is semi-algebraically connected, there exists a continuous
  semi-algebraic function $\gamma:[0, 1]\rightarrow C_{\eps, i}$ such
  that $\gamma(0)=\x_\eps$ and $\gamma(1)=\x'_\eps$. Note that by the
  intermediate value theorem \cite[Theorem 2.11]{BaPoRo06} (applied to
  the polynomial defining the boundary of $B$) there exists
  $\vartheta\in [0, 1]$ such that $\gamma(\vartheta)$ lies in the
  boundary of $\ext(B, \RRe)$. We deduce that
  $\limeps\gamma(\vartheta)$ belongs to the boundary of $B$.

  By \cite[Theorem 3.20]{BaPoRo06}, $\gamma([0,1])$ is closed and
  bounded. We also notice that $\gamma([0,1])$ is
  semi-algebrai\-cal\-ly connected. By \cite[Proposition
  12.36]{BaPoRo06}, we conclude that $\limeps\gamma([0,1])$ is
  semi-algebraically connected. We deduce that $\limeps(\gamma([0,
  1]))\subset C$. We deduce that $\limeps\gamma(\vartheta)$ belongs to
  $C$.

  Consequently, we have $\limeps\gamma(\vartheta)$ belongs to $C$ and
  to the boundary of $B$. This contradicts the fact that, by
  assumption, $C$ does not intersect the boundary of $B$.  Thus, we
  conclude that $C_{\eps, i}\subset \ext(B, \RRe)$ and $C_{\eps, i}$
  does not intersect the boundary of $\ext(B, \RRe)$.
\end{proof}

The following lemma relates the real dimension of a semi-algebraic set
in $\RRe^n$ with the real dimension of its image by $\limeps$. Its
proof uses quite standard tools from real algebraic geometry.

\begin{lemma}\label{lemma:dim:limites}
  Let $\mathcal{S}_{\eps}\subset \RRe^n$
  (resp. $\mathcal{Z}_\eps\subset \CCe^n$) be a semi-algebraic
  (resp. constructible) set and $d$ be its real (resp. Krull)
  dimension. Then $\limeps \mathcal{S}_{\eps}$ (resp. $\limeps
  \mathcal{Z}_\eps$) has real (resp. Krull) dimension less or equal to
  $d$.
\end{lemma}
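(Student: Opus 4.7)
The plan is to reduce both parts of the lemma to a single algebraic assertion about constructible sets and then to exploit the transcendence-degree characterization of Krull dimension together with an $\eps$-normalization of polynomials.

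First I would reduce the semi-algebraic case to the constructible one. By \cite[Proposition 2.8.2]{BoCoRo98} (already invoked in the preliminaries), the real dimension of a semi-algebraic set equals the Krull dimension of its Zariski closure, so it suffices to treat the constructible case applied to the Zariski closure $W_\eps \subset \CCe^n$ of $\mathcal{S}_\eps$: every bounded point of $\mathcal{S}_\eps$ is a bounded point of $W_\eps$, hence $\limeps \mathcal{S}_\eps \subset \limeps W_\eps$, and the Zariski closure of $\limeps \mathcal{S}_\eps$ in $\CC^n$ is contained in that of $\limeps W_\eps$, so its Krull dimension is bounded by that of $\limeps W_\eps$.

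For the constructible case, let $\mathcal{Z}_\eps \subset \CCe^n$ have Krull dimension $d$. I would use the classical equivalence coming from Noether normalization: $\dim Z \leq d$ if and only if, for every tuple $L_1, \ldots, L_{d+1}$ of $\CC$-linear forms in $X_1, \ldots, X_n$, these are algebraically dependent on $Z$. Fix such a tuple. Since $L_i \in \CC[X_1,\ldots,X_n] \subset \CCe[X_1,\ldots,X_n]$ and $\dim \mathcal{Z}_\eps = d$, the image of $\mathcal{Z}_\eps$ under the linear map $(L_1,\ldots,L_{d+1}): \CCe^n \to \CCe^{d+1}$ has $\CCe$-Krull dimension at most $d$, so it is contained in a proper Zariski-closed subset of $\CCe^{d+1}$. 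Hence there exists a non-zero $P \in \CCe[Y_1,\ldots,Y_{d+1}]$ with $P(L_1(\x),\ldots,L_{d+1}(\x)) = 0$ for every $\x \in \mathcal{Z}_\eps$.

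The central remaining step is to transfer this relation to $\limeps \mathcal{Z}_\eps$. I would normalize $P$ by multiplying it by $\eps^{-\nu}$, where $\nu$ is the minimum $\eps$-valuation among the Puiseux-series coefficients of $P$; after rescaling, every coefficient of $P$ is bounded over $\CC$ and at least one has valuation $0$, so $P_0 := \limeps P \in \CC[Y_1,\ldots,Y_{d+1}]$ is non-zero. For any bounded $\x \in \mathcal{Z}_\eps$, each $L_i(\x)$ is bounded (the $L_i$ have coefficients in $\CC$), so applying $\limeps$, a ring homomorphism on the bounded elements of $\CCe$, to the identity $P(L_1(\x),\ldots,L_{d+1}(\x)) = 0$ gives $P_0(L_1(\limeps\x),\ldots,L_{d+1}(\limeps\x)) = 0$. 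Thus $P_0$ vanishes identically on $\limeps \mathcal{Z}_\eps$, witnessing the algebraic dependence of $L_1,\ldots,L_{d+1}$ on it; since the $L_i$ were arbitrary, $\dim \limeps \mathcal{Z}_\eps \leq d$.

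The main obstacle is the normalization step and its interaction with the choice of $\CC$-linear forms: restricting to $L_i \in \CC[X_1,\ldots,X_n]$ is essential, because it guarantees that $L_i(\x)$ is bounded whenever $\x$ is, which is what lets the specialization of the relation $P = 0$ land in $\CC[Y_1,\ldots,Y_{d+1}]$ rather than in a ring whose coefficients still depend on $\eps$. One must also verify that the valuation of a polynomial with coefficients in $\CCe$ is well-defined and that the rescaling stays inside the Puiseux-series field, both of which follow from the standard structure of $\CCe$ as a valued field.
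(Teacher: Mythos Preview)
Your proof is correct and takes a genuinely different route from the paper's. The paper argues that $\limeps$ is itself a semi-algebraic map: it writes down a first-order formula for the graph of $\limeps$ (essentially ``for every $r>0$ there is $\eps_0$ such that \ldots''), invokes quantifier elimination to conclude the graph is semi-algebraic, and then applies \cite[Proposition~5.29]{BaPoRo06}, which says that a semi-algebraic map cannot increase dimension; the constructible case is dispatched by the analogous facts over algebraically closed fields. You instead reduce the real case to the complex one via Zariski closure, and then handle the constructible case directly with the transcendence-degree characterization of Krull dimension: any $d{+}1$ linear forms with $\CC$-coefficients are algebraically dependent on $\mathcal{Z}_\eps$, and an $\eps$-adic normalization of the witnessing relation specializes it to a nonzero relation over $\CC$ that vanishes on $\limeps\mathcal{Z}_\eps$. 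Your argument is more elementary---it avoids quantifier elimination and works purely with the valuation on $\CCe$---and produces an explicit algebraic witness $P_0$ for the dimension bound. The paper's approach, on the other hand, yields as a by-product that $\limeps\mathcal{S}_\eps$ is semi-algebraic; your reduction tacitly uses this when you invoke \cite[Proposition~2.8.2]{BoCoRo98} on $\limeps\mathcal{S}_\eps$, but that fact is standard (it is essentially \cite[Proposition~12.36]{BaPoRo06}) and in any case your bound on the Krull dimension of the Zariski closure already forces the real-dimension bound directly from Definition~\ref{def:dim}.
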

\begin{proof}
  The proof of this statement relies on
  \cite[Proposition~5.29]{BaPoRo06} which states that if $A$ is a
  semi-algebraic subset of $\RR^m$ and $h: A \rightarrow \RR^{n}$ is a
  semi-algebraic mapping, then the real dimension of $h(A)$ is less
  than or equal to the real dimension of $A$.  Thus, it suffices to
  prove that the ring homomorphism $\limeps$ is a semi-algebraic
  function, i.e. a function whose graph is a semi-algebraic
  function. This is a quite routine statement in real algebraic
  geometry; we give the proof since we could not find a reference stating
  it explicitly. To do that, we reuse some ingredients of
  \cite[Proposition~12.36]{BaPoRo06}.

  Recall that by assumption $\mathcal{S}_{\eps}$ is a semi-algebraic set of
  $\RRe^n$ and that, by definition $\RRe$ is the real closure of
  $\RR(\eps)$. Then, by \cite[Proposition 2.82]{BaPoRo06}, there exists
  a quantifier-free Boolean formula of conjunctions and disjunctions
  of polynomials in $\RR[\eps][X_1, \ldots, X_n]$ which define
  $\mathcal{S}_\eps$. Below, we denote by $\Psi(\X, \eps)$ such a formula.

  Consider the following set 
  \begin{displaymath}
    T = \{ (\x, x_{n+1}) \in \RR^{n+1} \,|\, \Psi(\x, x_{n+1}) \wedge x_{n+1} >0\}
    \enspace,
  \end{displaymath}
  and let $\overline T$ its closure and $H$ be the hyperplane defined
  by $X_{n+1}=0$ in $\RR^{n+1}$.  As in the proof of
  \cite[Proposition~12.36]{BaPoRo06}, the following equalities hold:
  $$
  \mathcal{S} = \limeps \mathcal{S}_{\eps} = {\overline T}  \,\inter\, H. 
  $$
  We can  express the limit using a formula in the
  first order theory of reals.  For $\x \in \mathcal{S}_{\eps}$ and $\y \in \mathcal{S}$ we
  have
  \begin{displaymath}
    \Phi_1 := \left[
      (\forall r > 0) \,(\exists \eps_0) : (\forall \eps) \, (0 < \eps < \eps_0) 
      \Rightarrow \| \x - \y \|^2 < r^2 \right]  \Leftrightarrow    \limeps \x = \y
    \enspace .
  \end{displaymath}
  By quantifier elimination over the reals (see \cite{Tarski,c-qe-1975}
  or \cite[Theorem 2.77]{BaPoRo06}), there exists a quantifier free
  formula $\Psi_1$ which is equivalent to $\Phi_1$. 
  The set
  \begin{displaymath}
    \{ (\x, \y) \in \RRe^n \times \RR^{n} \,|\, \Psi \wedge (\y \in \mathcal{S})  \wedge \Psi_1\}
    \enspace,
  \end{displaymath}
  is the graph of $\limeps$ and is semi-algebraic.  This concludes the
  proof.

  The proof in the complex case uses exactly the same techniques as
  above transposed to algebraically closed fields, i.e. if $A$ and $B$
  are constructible sets and $h: A\rightarrow B$ is a regular map (see
  \cite[Chapter 1]{Shafarevich77}), then the Krull dimension of $h(A)$
  is less than or equal to $h(B)$ and quantifier elimination over
  algebraically closed fields (see e.g. \cite[Chapter 1]{BaPoRo06}).
\end{proof}

\begin{lemma}\label{lemma:locallyfinite}
  Consider $\x\in V\cap\RR^n$ and assume that $f$ is non-negative over
  $\RR^n$ and that there exists a neighbourhood $\mathscr{B}\subset
  \RR^n$ of $\x$ such that $V \inter \mathscr{B} \inter \RR^n$ is a
  finite set.
  Then, there exists $\x_\eps\in \crit(\pi_1, V_\varepsilon)$ such
  that $\limeps \x_\eps=\x$.
\end{lemma}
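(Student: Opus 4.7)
The plan is to produce $\x_\eps$ as a minimizer of $\pi_1$ on a semi-algebraically connected component of $V_\eps \cap \RRe^n$ that sits infinitesimally close to $\x$; first-order conditions at such a minimum will force the required vanishing of partial derivatives, placing $\x_\eps$ in $\crit(\pi_1, V_\eps)$.

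First I would exploit the local-finiteness assumption to isolate $\x$. Since $V \cap \mathscr{B} \cap \RR^n$ is a finite set containing $\x$, I can choose an open Euclidean ball $B = B(\x, r) \subset \mathscr{B}$ with $r$ so small that $V \cap \overline{B} \cap \RR^n = \{\x\}$. Then $\{\x\}$ is both open and closed in $V \cap \RR^n$, hence a semi-algebraically connected component of $V \cap \RR^n$, and it does not meet the boundary of $B$. Applying Proposition~\ref{prop:comp-deform} to $C = \{\x\}$ with this ball yields semi-algebraically connected components $C_{\eps,1}, \ldots, C_{\eps,\ell}$ of $V_\eps \cap \RRe^n$, each contained in $\ext(B, \RRe)$ and disjoint from its boundary, such that $\{\x\} = \bigcup_{i=1}^\ell \limeps C_{\eps, i}$. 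In particular $\ell \geq 1$, so I can select a non-empty $C_\eps := C_{\eps, 1}$.

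Next I would extract $\x_\eps$ as a minimizer of $\pi_1$ on $C_\eps$. A semi-algebraically connected component of a closed semi-algebraic set is closed, so $C_\eps$ is closed in $\RRe^n$; being contained in $\ext(B, \RRe)$ it is also bounded. By \cite[Theorem~3.20]{BaPoRo06}, the continuous semi-algebraic function $\pi_1$ attains its minimum on $C_\eps$ at some $\x_\eps$. According to the definition recalled in Section~\ref{sec:prelim}, $\x_\eps$ lies in $\crit(\pi_1, V_\eps)$ in either of two cases: if $V_\eps$ is smooth at $\x_\eps$, the projection being extremal forces $\nabla f(\x_\eps)$ to be parallel to $e_1$, so $\partial f/\partial X_2, \ldots, \partial f/\partial X_n$ all vanish at $\x_\eps$; if $\x_\eps$ is a singular point of $V_\eps$, the whole gradient vanishes and the truncated Jacobian is trivially zero.

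To conclude, $\x_\eps \in \ext(B, \RRe)$ is bounded over $\RR$, so $\limeps \x_\eps \in \overline{B}$; applying the ring homomorphism $\limeps$ to $f(\x_\eps) = \eps$ yields $f(\limeps \x_\eps) = 0$, hence $\limeps \x_\eps \in V \cap \overline{B} \cap \RR^n = \{\x\}$, i.e.\ $\limeps \x_\eps = \x$. The only step that requires genuine care is the initial reduction: one must check that the local-finiteness hypothesis promotes $\{\x\}$ to a semi-algebraically connected component of $V \cap \RR^n$ globally, and that $B$ can be shrunk so its boundary avoids $V \cap \RR^n$ entirely, so that Proposition~\ref{prop:comp-deform} applies cleanly. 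Everything after that is a mechanical combination of that proposition with compactness and the definition of $\crit$.
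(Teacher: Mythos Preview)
Your proposal is correct and follows essentially the same route as the paper: isolate $\{\x\}$ as a bounded semi-algebraically connected component of $V\cap\RR^n$, apply Proposition~\ref{prop:comp-deform} to obtain closed and bounded components $C_{\eps,i}\subset\ext(B,\RRe)$ of $V_\eps\cap\RRe^n$ with $\limeps C_{\eps,i}=\{\x\}$, and then extract a point of $\crit(\pi_1,V_\eps)$ by minimizing $\pi_1$ on such a component. The only cosmetic difference is that the paper reads off $\limeps\x_\eps=\x$ directly from the conclusion $\limeps C_{\eps,i}\subset\{\x\}$ of Proposition~\ref{prop:comp-deform}, whereas you rederive it via $f(\limeps\x_\eps)=0$ and $\limeps\x_\eps\in\overline{B}$; both are fine.
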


\begin{proof}
  By assumption $V \inter \mathscr{B} \inter \RR^n$ is a finite set.
  Hence, there exists a sufficiently small $r\in \RR$, $r >0$, such
  that $f$ is positive at all points in a ball, $B$, with center at
  $\x$ and radius $r$, except $\x$.  That is $\x$ is the only point of
  $V \inter R^n$ in $B$; hence $\{\x\}$ is a bounded semi-algebraically
  connected component of $V \inter \RR^n$.
  

  By Proposition~\ref{prop:comp-deform}, there exist
  semi-algebraically connected components $C_{\eps, 1}, \ldots,
  C_{\eps, \ell}$ of $V_\eps, \RRe^n$ such that
  $\{\x\}=\union_{i=1}^\ell \limeps C_{\eps,i}$,
  $C_{\eps, i}\subset \ext(B, \RRe)$,
  and  $C_{\eps, i}$ does not intersect the boundary 
  of $\ext(B, \RRe)$, $1\leq i \leq \ell$.

  We deduce that for $1\leq i \leq \ell$, 
  $C_{\eps, i}$ is closed and bounded. 
  However, $C_{\eps, i} \subset \ext(B, \RRe)$ and it is closed and
  bounded; as a consequence, it has a non-empty intersection with
  $\crit(\pi_1, V_\eps)$. Since we already observed that 
  $\limeps C_{\eps, i}\subset \{\x\}$, we deduce that for all $\x_\eps\in
  \crit(\pi_i, V_\eps)\cap C_{\eps, i}$, $\limeps \x_\eps=\x$. Our
  conclusion follows.
\end{proof}

\begin{lemma}\label{lemma:polarnonempty}
  Assume that $f$ is non-negative over $\RR^n$ and that $V\cap\RR^n$
  is bounded and non-empty. Then, for $1\leq i \leq n$, $\crit(\pi_i,
  V_\eps)$ is not empty and intersects all bounded semi-algebraically connected
  components of $V_\eps\cap\RRe^n$.
\end{lemma}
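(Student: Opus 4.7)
The first step is to reduce the statement to the case $i=1$. Indeed, by the very definition of the polar varieties of the hypersurface $V_\eps$ (defined by $f-\eps=0$), the set $\crit(\pi_1, V_\eps)$ is cut out by $f-\eps=0$ together with the vanishing of \emph{all} the partial derivatives $\partial f/\partial X_2,\ldots,\partial f/\partial X_n$, whereas $\crit(\pi_i, V_\eps)$ requires the vanishing of only the last $n-i$ of them. Hence $\crit(\pi_1,V_\eps)\subseteq \crit(\pi_i,V_\eps)$ for every $1\leq i\leq n$, so once I prove the result for $i=1$ it is automatic for all $i$.

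Next I would exhibit at least one bounded semi-algebraically connected component of $V_\eps\cap \RRe^n$, thereby establishing non-emptiness. Since $V\cap \RR^n$ is non-empty and bounded, I can choose a closed Euclidean ball $B\subset \RR^n$ whose interior contains $V\cap\RR^n$, so that $V\cap\RR^n$ does not meet $\partial B$. Picking any semi-algebraically connected component $C$ of $V\cap\RR^n$ and applying Proposition~\ref{prop:comp-deform}, I obtain semi-algebraically connected components $C_{\eps,1},\ldots,C_{\eps,\ell}$ of $V_\eps\cap\RRe^n$ with $C=\bigcup_j \limeps C_{\eps,j}$, each $C_{\eps,j}\subset \ext(B,\RRe)$, and none meeting the boundary of $\ext(B,\RRe)$. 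In particular each $C_{\eps,j}$ is bounded over $\RR$.

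The core of the argument is to show that every bounded semi-algebraically connected component $C_\eps$ of $V_\eps\cap\RRe^n$ meets $\crit(\pi_1,V_\eps)$. A component of a closed semi-algebraic set is closed, so $C_\eps$ is closed and bounded, and by \cite[Theorem 3.20]{BaPoRo06} the continuous semi-algebraic function $\pi_1$ attains its minimum on $C_\eps$ at some point $\x_\eps^{\star}$. I would then split into two cases. If $\x_\eps^{\star}$ is a singular point of $V_\eps$, then $\nabla f(\x_\eps^{\star})=0$, so all partial derivatives vanish and $\x_\eps^{\star}\in \crit(\pi_1,V_\eps)$ by definition. Otherwise $V_\eps$ is a smooth hypersurface at $\x_\eps^{\star}$ and $\pi_1$ restricted to $V_\eps$ has a critical point there, so by the Lagrange condition $e_1$ is parallel to $\nabla f(\x_\eps^{\star})$, forcing $\partial f/\partial X_j(\x_\eps^{\star})=0$ for all $j\geq 2$, and again $\x_\eps^{\star}\in\crit(\pi_1,V_\eps)$.

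The only real obstacle is to make sure the extremum-attainment and Lagrange arguments are legitimate over the Puiseux series field $\RRe$. The former is standard (closed and bounded semi-algebraic sets are mapped to closed and bounded semi-algebraic sets under continuous semi-algebraic maps, so $\pi_1(C_\eps)\subset \RRe$ is closed and bounded and admits a minimum); the latter is a purely first-order statement about critical points of the restriction of a polynomial to a smooth hypersurface, which transfers from $\R$ to any real closed field by Tarski's principle. Once these are in place the proof is complete.
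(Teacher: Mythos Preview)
Your proposal is correct and follows essentially the same route as the paper: use Proposition~\ref{prop:comp-deform} to produce bounded components of $V_\eps\cap\RRe^n$, observe that a bounded component is closed so that $\pi_1$ attains an extremum on it (via \cite[Theorem 3.20]{BaPoRo06}), and conclude that this extremum lies in $\crit(\pi_1,V_\eps)\subset\crit(\pi_i,V_\eps)$. The only cosmetic differences are that you place the reduction to $i=1$ at the start rather than the end, and you spell out the Lagrange/singular dichotomy and the transfer-to-$\RRe$ justification more explicitly than the paper does.
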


\begin{proof}
  Since $V\cap\RR^n$ is bounded and non-empty there exists a
  semi-algebraically connected component $C$ and a ball $B\subset
  \RR^n$ such that $C\subset B$ and $C\cap B=\emptyset$.  By
  Proposition~\ref{prop:comp-deform}, we deduce that there exist
  semi-algebraically connected components $C_{\eps, 1}, \ldots,
  C_{\eps, k}$ of $V_\eps\cap\RRe^n$ such that $C=\union_{\ell=1}^k
  \limeps C_{\eps,\ell}$. Moreover since we assume that $V\cap \RR^n$
  is bounded, there exists a ball $B\subset \RR^n$ such that $C\subset
  B$ and $C\cap B=\emptyset$. Using again
  Proposition~\ref{prop:comp-deform}, we conclude that $C_{\eps,
    \ell}\subset \ext(B, \RRe)$ for $1\leq \ell \leq k$; hence
  $C_{\eps, \ell}$ is closed and bounded.

  Below, we prove that any bounded semi-algebraically connected
  component $C_\eps$ of $V_\eps\cap \RRe^n$ has a non-empty
  intersection with $\crit(\pi_i, V_\eps)$.  Then $\pi_1(C_{\eps})$
  is closed and bounded (see \cite[Theorem 3.20]{BaPoRo06}). The
  extreme values of $\pi_1(C_{\eps})$ are attained at critical
  points of the restriction of $\pi_1$ to $V_\eps\cap \RRe^n$. In
  other words, $C_{\eps}$ has a non-empty intersection with
  $\crit(\pi_1, V_\eps)$. Now, note that $\crit(\pi_1, V_\eps)\subset
  \crit(\pi_i, V_\eps)$, by definition.
\end{proof}

\begin{lemma}\label{lemma:pointscritiquesetfibres}
  Let $f\in \CC[X_1, \ldots, X_n]$ and $V \subset \CC^n$ be defined by
  $f=0$.  Take $\alpha=(\alpha_1, \ldots, \alpha_{i-1}) \in \CC^i$ and
  let $V_{i,\alpha}$ the algebraic set 
  $V \inter \pi_{i-1}^{-1}(\alpha)$ and $\varphi_i$ be the canonical projection 
  $(x_1,  \ldots, x_n)\mapsto x_i$.  Then, the following  holds:
  $$\crit(\varphi_i, V_{i,\alpha})\subset\crit(\pi_i, V).$$
\end{lemma}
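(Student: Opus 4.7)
The plan is to unfold both sides of the inclusion via the polar-variety definition and reduce the proof to an elementary determinantal identity. First, since $V\subset\CC^n$ is cut out by the single polynomial $f$, applying the paper's definition of $\crit(\pi_i,V)$ with $p=1$ gives
\[
\crit(\pi_i,V)=\{\x\in\CC^n : f(\x)=0 \text{ and } \tfrac{\partial f}{\partial X_{i+1}}(\x)=\cdots=\tfrac{\partial f}{\partial X_n}(\x)=0\}.
\]

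Next, I will apply the same polar-variety recipe to $\crit(\varphi_i,V_{i,\alpha})$. Here $V_{i,\alpha}$ is defined by the $p=i$ polynomials $f,\,X_1-\alpha_1,\ldots,X_{i-1}-\alpha_{i-1}$, and $\varphi_i$ projects onto the single coordinate $X_i$. The corresponding truncated Jacobian is an $i\times(n-1)$ matrix whose columns are indexed by the $n-1$ variables $X_1,\ldots,X_{i-1},X_{i+1},\ldots,X_n$ (all variables except $X_i$). Its first row is $(\tfrac{\partial f}{\partial X_1},\ldots,\tfrac{\partial f}{\partial X_{i-1}},\tfrac{\partial f}{\partial X_{i+1}},\ldots,\tfrac{\partial f}{\partial X_n})$, while its remaining $i-1$ rows (coming from the linear constraints $X_j-\alpha_j$, for $j=1,\ldots,i-1$) form the identity block $I_{i-1}$ in the first $i-1$ columns and vanish elsewhere.

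The crux is the following determinantal observation: any $i\times i$ minor that fails to select all of the first $i-1$ columns necessarily omits one of the standard-basis rows in its chosen columns, hence contains an entirely zero row and vanishes. The remaining minors select all first $i-1$ columns together with one extra column, corresponding to $X_j$ for some $j\in\{i+1,\ldots,n\}$; a Laplace expansion along the identity block shows that such a minor equals $\pm\,\tfrac{\partial f}{\partial X_j}$. Consequently, the vanishing of all $i$-minors of the truncated Jacobian is equivalent to $\tfrac{\partial f}{\partial X_j}(\x)=0$ for every $j\in\{i+1,\ldots,n\}$. Since points of $V_{i,\alpha}$ already satisfy $f(\x)=0$, every point of $\crit(\varphi_i,V_{i,\alpha})$ satisfies the defining equations of $\crit(\pi_i,V)$, establishing the claimed inclusion. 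I do not anticipate any substantive obstacle: the content is purely a cofactor expansion made transparent by the identity pattern arising from the linear constraints $X_j=\alpha_j$.
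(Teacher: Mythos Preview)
Your proof is correct and follows essentially the same approach as the paper: both exploit the identity-block structure contributed by the linear constraints $X_j-\alpha_j$ to reduce the minor conditions to $\frac{\partial f}{\partial X_{i+1}}=\cdots=\frac{\partial f}{\partial X_n}=0$. The only cosmetic difference is that the paper sets up the computation with the full Jacobian of $(f,\,X_1-\alpha_1,\ldots,X_{i-1}-\alpha_{i-1},\,X_i)$ (appending $X_i$ as a function and keeping all $n$ columns), whereas you use the truncated Jacobian of the $i$ defining polynomials with the $X_i$-column removed; the resulting cofactor expansion is the same in both cases.
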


\begin{proof}
  By definition $\crit(\pi_{i}, V)$ is defined by the vanishing of $f$
  and of the partial derivatives $\frac{\partial f}{\partial X_{i+1}},
  \ldots, \frac{\partial f}{\partial X_n}$.  Besides, the set
  $V_{i,\alpha}$ is defined by the system
  \begin{equation*}
    \label{eq:pointscritiquesetfibres}
    f =0, X_1-\alpha_1=\cdots=X_{i-1}-\alpha_{i-1}=0.
  \end{equation*}
  By definition, $\crit(\varphi_i, V_{i,\alpha})$ is defined by the
  above equations and the vanishing of the maximal minors of the
  Jacobian matrix associated to $f,
  X_1-\alpha_1,\ldots,X_{i-1}-\alpha_{i-1}, X_i$. The triangular shape
  of this Jacobian matrix implies that $\crit(\varphi_i,
  V_{i,\alpha})$ is defined by $f =0,
  X_1-\alpha_1=\cdots=X_{i-1}-\alpha_{i-1}=0$ and the vanishing of the
  partial derivatives $\frac{\partial f}{\partial X_{i+1}}, \ldots,
  \frac{\partial f}{\partial X_n}$.

  The lemma follows from the observation that the system that defines
  $\crit(\pi_i, V_i)$ is contained in the system that defines
  $\crit(\varphi_i, V_{i,\alpha})$.
\end{proof}

\subsection{Proof of Proposition \ref{prop:criterion:dimension} and Consequences}

\begin{proof}[Proof of Proposition \ref{prop:criterion:dimension}.]
  First we prove the necessary condition. Assume that for a given $i\in \{0, \ldots, n\}$ it holds 
  $$
  \left (\limeps(\crit(\pi_i, V_\eps))\right )\inter U=S \enspace.
  $$ 

  We prove that this implies that the real dimension of $S$ is $\leq
  i-1$. When $S$ is empty, this is immediate; thus we can assume
  that $S$ is non-empty.

  By definition, $V\inter U=S$, therefore $V\inter\RR^n$ is non-empty;
  note also that $S$ is bounded since we assume that $V\inter \RR^n$
  is bounded. 

  Combining Lemma \ref{lemma:polarnonempty} and ${\sf N}_1$, we
  conclude that $\crit(\pi_i, V_\eps)$ has Krull dimension $i-1$.
  Lemma~\ref{lemma:dim:limites} implies that $\limeps(\crit(\pi_{i},
  V_\eps))$ has Krull dimension $\leq i-1$. We deduce that $\left
    (\limeps(\crit(\pi_i, V_\eps))\right )\inter \RR^n$ has real
  dimension $\leq i-1$. Consequently, $\left (\limeps(\crit(\pi_i,
    V_\eps))\right )\inter U$ has real dimension $\leq i-1$.  By
  assumption, $\left (\limeps(\crit(\pi_i, V_\eps))\right )\inter U =S$;
  we conclude that $S$ has real dimension $\leq i-1$.

  Next, we assume that the real dimension of $S$
  is $\leq i-1$; we prove below that this implies that
  $\left (\limeps(\crit(\pi_i, V_\eps))\right ) \inter U=S$.

  The inclusion $\left (\limeps(\crit(\pi_i, V_\eps))\right ) \inter
  U\subseteq S$ follows from the fact that 
  $\crit(\pi_i, V_\eps)\subseteq \Veps$,
  $\limeps\Veps \subseteq V$,
  and $V\inter U\subset S$.

  It remains to prove the inverse inclusion, that is 
  $S \subseteq \left (\limeps(\crit(\pi_i, V_\eps))\right ) \inter U$.

  If $S$ is empty, then it is immediate that 
  $S \subseteq \left (\limeps(\crit(\pi_i, V_\eps))\right ) \inter U$.
  In the sequel, we assume that $S$ is not empty. 
  
  Let $\x=(\alpha_1,\ldots, \alpha_n)\in S$; since $S=U\inter V$ by
  definition, we have $\x\in U$ and $\x\in V\inter \RR^n$. Thus, we need
  to prove that there exists a point $\x_\eps\in \crit(\pi_i,
  V_\eps)$, such that $\limeps \x_\eps= \x$. This is what we do below.

  \medskip Recall that we have assumed that the real dimension of $S$
  is $\leq i-1$. Since $V \inter U= S$, we deduce that the real
  dimension of $V\inter U$ is $\leq i-1$. It follows that the local real
  dimension of $V\inter U$ at $\x$ is at most $i-1$. We claim that the
  real dimension of $V\inter\RR^n$ at $\x$ is $\leq i-1$. 

  Indeed, let $S_1, \ldots, S_k$ be smooth semi-algebraic sets of
  respective dimension $d_j$ which form a finite partition of
  $V$. The local real dimension of $V\inter \RR^n$ at $\x$
  is the maximum of the $d_j$'s for $j$ such that $\x\in
  \overline{S_j}$. Since $U$ is open, $S_j\inter U$ is either empty or
  smooth of dimension $d_j$. Note also that the semi-algebraic sets
  $S_1\inter U, \ldots, S_k\inter U$ form a partition of $V\inter
  U$. Therefore, the local real dimension of $V\inter \RR^n$ at
  $\x$ is the same as the local real dimension of $V\inter U$ at $\x$
  which is $\leq i-1$ as requested.

  Denoting $\pi_{i-1}(\x)$ by $\x_{i-1}$, we deduce by ${\sf N}_2$
  that $\pi_{i-1}^{-1}(\x_{i-1}) \inter V \inter \RR^n$ is finite.
  Now, let $\tilde f$ be the polynomial obtained after instantiating
  the first $i-1$ variables with the first $i-1$ coordinates of $\x$,
  that is
  \[
  \tilde f = f(X_{i}, \dots, X_n) = f(\alpha_1, \dots, \alpha_{i-i}, X_i, \dots, X_n)
  \enspace .
  \]
  
  We let $\tilde x=(\alpha_i, \ldots, \alpha_n)$ and $\tilde V$
  (resp. $\tilde V_{\eps}\subset \Ce^{n-i+1}$) be the algebraic set
  defined by ${\tilde f}=0$ (resp. $\tilde f=\eps$).
  By assumption, $f$ is non-negative over $\RR^n$.  Therefore, $\tilde
  f$ is non-negative over $\RR^{n-i+1}$.  We also consider the
  canonical projections
  $${\tilde \varphi}_i: (\x_i, \ldots, \x_n)\to \x_i 
  \quad \text{ and }\quad \varphi_i: (\x_1, \ldots, \x_n)\to \x_i.$$
  By applying Lemma~\ref{lemma:locallyfinite} to $\tilde V$ and
  $\tilde f$, there exists $ \tilde \x_\eps\in \crit(\tilde\varphi_i,
  \tilde{V}_\eps)$,
  such that $\limeps\tilde\x_\eps=(\alpha_i, \ldots, \alpha_n)$.  Now
  define $\x_\eps=(\alpha_1, \ldots, \alpha_{i-1}, \tilde\x_\eps)$ and
  $V'_{\eps}=V_\eps\inter \pi_{i-1}^{-1}(\alpha_1, \ldots,
  \alpha_{i-1})$. 


  Since $ \tilde \x_\eps\in \crit(\tilde\varphi_i,\tilde{V}_\eps)$,
  it is immediate that $\x_\eps\in \crit(\varphi_i, V'_{\eps})$,
  and using Lemma~\ref{lemma:pointscritiquesetfibres}
  $\x_\eps \in \crit(\pi_i, V_\eps)$. To summarize,
  we have established $\x_\eps \in \crit(\pi_i, V_\eps)$ and
  $\limeps\x_\eps=\x$ as requested.  This finishes the proof.
\end{proof}

 
The following corollary is a direct consequence of
Proposition~\ref{prop:criterion:dimension} and it is crucial for the
proof of correctness of our algorithm.  Recall that, by convention,
$\crit(\pi_0, V_\eps)$ is the empty set.
\begin{corollary}
  \label{cor:Wi-diff}
  Assume that $f$ satisfies property $\func{N}$ and that $V$ is
  bounded.  Let $d$ be the real dimension of $S$.  Then, $d\geq 0$ if
  and only if there exist an integer $i$ in $\{1, \ldots, n\}$ such
  that $\left (\limeps \crit(\pi_{i}, V_\eps)\right ) \inter U \neq
  \left (\limeps \crit(\pi_{i-1}, V_\eps)\right ) \inter U$ and $d+1$
  is the largest of these integers.
\end{corollary}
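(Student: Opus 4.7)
The plan is to apply Proposition \ref{prop:criterion:dimension} directly to the chain of sets $W_i := \bigl(\limeps \crit(\pi_i, V_\eps)\bigr) \cap U$ for $0 \leq i \leq n$. The proposition asserts that $W_i = S$ if and only if the real dimension of $S$ is at most $i-1$, and the corollary is essentially a rephrasing of this biconditional in terms of successive differences $W_i \neq W_{i-1}$.

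First I would dispose of the empty case $d = -1$. Then $\dim S \leq i-1$ holds for every $i \geq 0$, so the proposition yields $W_i = \emptyset$ for all $i \in \{0,\ldots,n\}$; here the convention $\crit(\pi_0, V_\eps) = \emptyset$ covers the base case. Hence $W_i = W_{i-1}$ for every $i \in \{1,\ldots,n\}$, which matches one direction of the "if and only if" in the corollary. Assuming instead $d \geq 0$, the biconditional rephrases as follows: $W_i = S$ precisely when $i \geq d+1$, and $W_i \neq S$ when $i \leq d$. In particular $W_{d+1} = S \neq W_d$, witnessing an index at which $W_i \neq W_{i-1}$; moreover $d+1$ lies in $\{1,\ldots,n\}$ since $S$ is contained in the zero set of the nonzero polynomial $f$, forcing $d \leq n-1$.

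To identify $d+1$ as the \emph{largest} such index, I would observe that for any $j > d+1$ both $W_j$ and $W_{j-1}$ equal $S$ by the proposition, so no jump occurs above $d+1$. I do not anticipate any real obstacle: the whole argument is a bookkeeping exercise on the characterization provided by Proposition \ref{prop:criterion:dimension}. The only delicate point is the boundary case $i = 0$, where the convention $\crit(\pi_0, V_\eps) = \emptyset$ ensures $W_0 = \emptyset$ and hence keeps the "first jump" $W_d \neq W_{d+1}$ meaningful even when $S$ is a nonempty finite set (the case $d = 0$, where $W_1 = S$ is compared with $W_0 = \emptyset$).
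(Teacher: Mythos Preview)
Your proposal is correct and follows essentially the same approach as the paper: both apply Proposition~\ref{prop:criterion:dimension} directly to conclude that $W_i = S$ precisely for $i \geq d+1$, and then read off the location of the last jump. Your version is slightly more careful in explicitly treating the empty case $d=-1$ and in justifying $d+1 \leq n$, but the underlying argument is identical.
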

\begin{proof}
  Note that the assumptions of
  Proposition~\ref{prop:criterion:dimension} are satisfied. This
  implies that, for $0\leq i \leq n$, $\left (\limeps\crit(\pi_i,
    V_\eps)\right )\inter U=S$ if and only if $d\leq i-1$.  As a
  consequence, we deduce that
$$
S=\left (\limeps\crit(\pi_{d+1}, V_\eps)\right )\inter U=\cdots=\left (\limeps\crit(\pi_{n}, V_\eps)\right )\inter U
$$
{ and } $S\neq \left (\limeps\crit(\pi_{i}, V_\eps)\right )\inter U${ for } $1\leq i \leq d$.
This concludes the proof. 
\end{proof}

\section{Algorithm and proof of Theorem \ref{thm:main}}
\label{sec:Algo}

{\bf There is an error in the main theorem and the stated bound does not hold.}

\subsection{Main algorithm}

The input of our algorithm is two sets of polynomials $F = (f_1,
\ldots, f_{p})$ and $G=(g_1, \ldots, g_s)$ in $ \ZZ[X_1, \ldots,
X_n]$.  The output is the real dimension of the semi-algebraic set
$\mathscr{S}\subset \RR^n$ defined by 
\[
f_1=\cdots=f_p=0, \quad g_1>0, \ldots,g_s>0  \enspace.
\]

The main idea of the algorithm is to perform symbolic manipulations to
define a polynomial $f$ with coefficients in $\RR[\zeta]$ such that
\begin{itemize}
\item $f$ and the semi-algebraic set $S\subset \RRz^{n+1}$ defined
  by $f=0$ and $g_1>0, \ldots, g_s>0$ satisfy the assumptions of
  Corollary~\ref{cor:Wi-diff}, and
\item the real dimension of $S$ is the same as the real
  dimension of $\mathscr{S}$.
\end{itemize}
Then, we apply the results of the previous section
(Section~\ref{sec:geom-stmt}) using as ground field (that
is the field where the coefficients of the input polynomials belong
to) $\RRz$.

Let us recall the notations that we use: $S\subset \RRz^{n+1}$
is the semi-algebraic set defined by $f=0, g_1>0, \ldots, g_s>0$,
$U\subset \RRz^{n+1}$ is the open semi-algebraic set defined by
$g_1>0, \ldots, g_s>0$, and $V\subset \CCz^{n+1}$,
resp. $V_\eps\subset \CCze^{n+1}$, is the algebraic set defined by
$f=0$, resp. $f=\eps$.

From Corollary~\ref{cor:Wi-diff}, $d\geq 0$ if and only if there
exists an integer $i$ in $\{1, \ldots, n\}$ such that $$\left (\limeps
  \crit(\pi_{i}, V_\eps)\right ) \inter U \neq\left ( \limeps
  \crit(\pi_{i-1}, V_\eps)\right ) \inter U \,,$$ and $d+1$ is the
largest of these integers (notice that by convention $\crit(\pi_0,
V_\eps)=\emptyset$; see Section \ref{sec:prelim}).

\paragraph*{Subroutines.}
We need three subroutines.  The first one is called
\func{Random}; it takes as input an integer $n$ and returns a randomly
chosen $n \times n$ matrix in $\GL_n(\ZZ)$.


The second routine \func{IsEmpty} takes as input a polynomial $H$ and
a set of polynomials ${\cal G}=(G_1, \ldots, G_s)$ in $\ZZ[\zeta,\eta,
\eps][X_1, \ldots, X_n]$,
where $\zeta$, $\eta$, and $\eps$ are infinitesimals. 
Let $V \subset \CCzhe^n$ be the algebraic set defined by $H=0$ 
and $U \in \RRzhe^n$ the open semi-algebraic set defined 
by $G_1>0, \ldots, G_s>0$.
The subroutine decides if the semi-algebraic set 
$V \inter U$ is empty or not.

Let $\delta$ be the maximum of the degrees of the monomials in $\zeta,
\eta, \eps, X_1, \ldots, X_n$ appearing in $H$ and ${\cal G}$. If
$\ZZ=\mathbb{Z}$, then we denote by $\tau$ the maximum bit size of the
integers appearing in $H$ and ${\cal G}$.  The algorithm returns
\func{True} if the semi-algebraic set $V \inter U$ is empty, and
\func{False} otherwise. It is based on \cite[Algorithm~13.1]{BaPoRo06}.

\begin{lemma}
  \label{lem:subroutine-IsEmpty}
  Using the above notations, algorithm \func{IsEmpty}$(H, {\cal G})$ decides
  if the semi-algebraic set defined by $H=0$ and $G_1>0, \ldots,
  G_s>0$ is empty or not within $(s \,\delta)^{O(n)}$ arithmetic
  operations in $\ZZ$.
  When $\ZZ=\mathbb{Z}$, then the Boolean complexity of  
  \func{IsEmpty} is $\tau \, (s \, \delta)^{O(n)}$.
\end{lemma}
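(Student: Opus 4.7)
The plan is to reduce directly to \cite[Algorithm~13.1]{BaPoRo06}, which, given a finite family of polynomials together with a sign condition, produces a finite set of sample points meeting every semi-algebraically connected component of the realization of that sign condition. Since the input $(H, \mathcal{G})$ already specifies the sign condition $H=0 \wedge G_1>0 \wedge \cdots \wedge G_s>0$, I would simply invoke this algorithm on the family $\{H, G_1, \ldots, G_s\}$ and return \func{True} if and only if the returned set of sample points is empty. Correctness is immediate from the specification of the BPR algorithm: emptiness of the realization is equivalent to emptiness of the sample set.

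For the arithmetic complexity, the analysis of \cite[Algorithm~13.1]{BaPoRo06} yields $s^{n+1}\delta^{O(n)} = (s\delta)^{O(n)}$ arithmetic operations in the base ring for a family of $s+1$ polynomials of degree at most $\delta$ in $n$ variables. The subtlety here is that the ``base ring'' is not $\ZZ$ itself but the extension obtained by adjoining the three infinitesimals $\zeta$, $\eta$, $\eps$, ordered so that $0 < \eps \ll \eta \ll \zeta$. I would appeal to the standard machinery of \cite[Chapter~2]{BaPoRo06} for sign determination in such real closed field extensions, and observe that since $\delta$ is the total degree bound including the infinitesimal variables, the monomial supports in the infinitesimals encountered throughout the algorithm are already accounted for in $\delta$. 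Consequently the infinitesimals contribute no additional factors beyond the stated bound.

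The main obstacle I expect is a careful bookkeeping argument to confirm that the sign-determination subroutines invoked by Algorithm~13.1 retain the $(s\delta)^{O(n)}$ bound when executed in the field $\ZZ\langle\zeta\rangle\langle\eta\rangle\langle\eps\rangle$. The key point is that sign determinations reduce (via Thom encodings) to arithmetic in the residue field obtained by limits, and the number of such operations is governed by the same combinatorial structure that produces the $(s\delta)^{O(n)}$ bound in the rational case. This is the place where I would need to be most precise about how the three-layered ordered extension interacts with the BPR subroutines.

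For the Boolean complexity when $\ZZ = \mathbb{Z}$, I would track coefficient growth through the computation. The standard bit-complexity analysis of \cite[Algorithm~13.1]{BaPoRo06} shows that coefficients of intermediate polynomials have bit sizes bounded by $\tau$ times a quantity polynomial in the arithmetic cost, yielding the overall bound $\tau(s\delta)^{O(n)}$. This matches the claimed bit complexity and completes the argument.
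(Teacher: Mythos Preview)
Your proposal is correct and follows essentially the same route as the paper. The paper's own proof is even terser: it cites \cite[Proposition~2.2]{v-ld-jsc-1999} (Vorobjov) directly for the $(s\,\delta)^{O(n)}$ arithmetic bound over $\ZZ$, and then combines that with \cite[Algorithm~13.1]{BaPoRo06} for the Boolean complexity. In effect, Vorobjov's proposition already packages the bookkeeping you outline for absorbing the three infinitesimals into the degree parameter~$\delta$, so where you unpack the argument for why operations in $\ZZ\langle\zeta\rangle\langle\eta\rangle\langle\eps\rangle$ cost only polynomially more than operations in $\ZZ$, the paper simply invokes the black box. Your more explicit discussion of the three-layered extension and the degree bound on the infinitesimal variables is a reasonable and correct expansion of what the paper leaves implicit in the citation.
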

\proofatend
  We decide the emptiness of $V \inter U$ using the algorithm from \cite[Proposition~2.2]{v-ld-jsc-1999}.
  The cost is $(s\, \delta)^{O(n)}$ operations in $\ZZ$.
  The Boolean complexity bound follows by combining \cite[Proposition~2.2]{v-ld-jsc-1999}
  with \cite[Algorithm~13.1]{BaPoRo06}.
\endproofatend
When \func{IsEmpty} is called by the main algorithm,
the input polynomials have coefficients in $\ZZ[\zeta]$.


The third subroutine, \func{DisjointPolar}, takes as input a
polynomial $f\in \ZZ[\zeta][X_1, \ldots, X_{n+1}]$, 
$G=(g_1, \ldots, g_s)\subset \ZZ[X_1, \ldots, X_n]$,
$Q \in \ZZ[\zeta][X_1, \ldots, X_{n+1}]$,
 and an integer $i\in \{1, \ldots,
n\}$ (where $\zeta$ is an infinitesimal that we manipulate as a
variable). 
The polynomial $Q$ defines a sphere in $\RRz^{n+1}$ that strictly contains the 
semi-algebraic set $S$. For example 
$Q = (\zeta (X_1^2 + \dots + X_n^2 +X_{n+1}^2)-2)$.

The routine $\func{DisjointPolar}(f, G, Q,i)$ returns \func{True} if
$$\left (\limeps\crit(\pi_i, V_\eps)\cap U\right )\neq \left (\limeps\crit(\pi_{i-1}, V_\eps)\right )\cap U,$$
and \func{False} otherwise.  Below, we denote by $\delta$ the maximal
degree of the monomials in $\zeta, X_1, \ldots, X_n$ appearing in $f$,
$G$ and $Q$.

The routine \func{DisjointPolar} is described 
Section~\ref{sec:disjoint-pair}, where we also prove the following
Lemma.

\begin{lemma}\label{lem:disjoint-polar}
  Let $f\in \ZZ[\zeta][X_1, \ldots, X_{n+1}]$, $G=(g_1, \ldots,
  g_s)\subset \ZZ[X_1, \ldots, X_n]$, $B\in \ZZ[\zeta][X_1, \ldots,
  X_{n+1}]$, and $i\geq 0$.  Let $S$ be the semi-algebraic set defined
  by $\{f=0 \wedge G > 0\}$, and let $Q \leq 0$ define a ball that
  strictly contains $S$.  Assume that $f$ satisfies $\sf N$ and is
  non-negative over $\RRz^n$ and that $\left (\limeps\crit(\pi_i,
    V_\eps)\right )\inter U$ is not empty.
There exists an algorithm \func{DisjointPolar} which decides if
  $\left (\limeps\crit(\pi_i, V_\eps)\right )\cap U\neq \left
    (\limeps\crit(\pi_{i-1}, V_\eps)\right )\cap U$ within $(s \,
  \delta)^{O(n)}$ operations in $\ZZ$.

  If $\ZZ=\mathbb{Z}$ and the maximum bit size of the integers in the coefficients of
  the input polynomials is $\tau$, then the Boolean complexity of the
  algorithm is $\tau\, (s \,\delta)^{O(n)}$.
\end{lemma}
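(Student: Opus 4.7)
The plan is to reduce the test
$(\limeps\crit(\pi_i,V_\eps))\cap U\neq(\limeps\crit(\pi_{i-1},V_\eps))\cap U$
to the emptiness of a single polynomial system in $\ZZ[\zeta,\eta,\eps][X_1,\ldots,X_{n+1}]$, and then invoke \func{IsEmpty}; the claimed complexity bound will then follow directly from Lemma~\ref{lem:subroutine-IsEmpty}. An auxiliary infinitesimal $\eta$, ordered so that $\eps\ll\eta\ll\zeta$, will let us capture standard-part positivity by a single polynomial inequality. The starting observation is that $\crit(\pi_{i-1},V_\eps)\subseteq\crit(\pi_i,V_\eps)$, since the Jacobian truncation defining the former is a restriction of that defining the latter; hence $\limeps\crit(\pi_{i-1},V_\eps)\cap U\subseteq\limeps\crit(\pi_i,V_\eps)\cap U$, and the two sets differ if and only if there is a point $\y\in\crit(\pi_i,V_\eps)$ bounded over $\RRz$ (enforced by $Q(\y)\le 0$) with $\limeps\y\in U$ and $\limeps\y\notin\limeps\crit(\pi_{i-1},V_\eps)$.

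Membership in $\crit(\pi_i,V_\eps)$ is encoded by $f(\y)=\eps$ together with $\partial_{X_{i+1}}f(\y)=\cdots=\partial_{X_{n+1}}f(\y)=0$. For bounded $\y$, the condition $\limeps\y\in U$ is equivalent, under the $\eps\ll\eta\ll\zeta$ ordering, to the inequalities $g_j(\y)>\eta$ for $1\le j\le s$: a standard-part-positive element of $\RRze$ dominates the infinitesimal $\eta$, and conversely an element infinitesimal over $\RRz$ is smaller than $\eta$. The crux is the last condition $\limeps\y\notin\limeps\crit(\pi_{i-1},V_\eps)$: natively this is a universal statement over $\z\in\crit(\pi_{i-1},V_\eps)$, and brute-force quantifier elimination on $\z$ would destroy the complexity bound. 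To avoid that, I would exploit the genericity property ${\sf N}_1$, which guarantees that $\crit(\pi_{i-1},V_\eps)$ is smooth and equidimensional of Krull dimension $i-2$, so $\limeps\crit(\pi_{i-1},V_\eps)$ has real dimension at most $i-2$ (via Lemma~\ref{lemma:dim:limites}); combined with the description of $\crit(\pi_i,V_\eps)\setminus\crit(\pi_{i-1},V_\eps)$ as the open locus where $\partial_{X_i}f\neq 0$, this should yield a single polynomial inequation in $\y$ and $\eta$ capturing the non-membership.

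The resulting system then has $O(s)$ constraints of degree $O(\delta)$ in $n+1$ variables over $\ZZ[\zeta,\eta,\eps]$; plugging it into \func{IsEmpty} gives the arithmetic bound $(s\,\delta)^{O(n)}$, and for $\ZZ=\mathbb{Z}$ the Boolean bound $\tau\,(s\,\delta)^{O(n)}$ by the same lemma. The technically hardest step will be the faithful encoding of the non-membership condition $\limeps\y\notin\limeps\crit(\pi_{i-1},V_\eps)$ as a single quantifier-free polynomial predicate, together with the careful bookkeeping of the three nested infinitesimals $\eps,\eta,\zeta$; once that step is nailed down, the rest of the argument is a routine combination of the polynomial description of $\crit(\pi_i,V_\eps)$, the boundedness constraint $Q(\y)\le 0$, and a single invocation of \func{IsEmpty}.
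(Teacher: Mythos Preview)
Your proposal has a genuine gap at exactly the step you flag as ``technically hardest'': encoding the condition $\limeps\y\notin\limeps\crit(\pi_{i-1},V_\eps)$ as a single quantifier-free polynomial predicate in $\y$ and the infinitesimals. The ingredients you propose --- the dimension bound from ${\sf N}_1$ together with the open condition $\partial_{X_i}f(\y)\neq 0$ --- do not suffice. The inequation $\partial_{X_i}f(\y)\neq 0$ (or any strengthening such as $|\partial_{X_i}f(\y)|>\eta$) only certifies $\y\notin\crit(\pi_{i-1},V_\eps)$; it says nothing about whether the \emph{limit} $\limeps\y$ lies in $\limeps\crit(\pi_{i-1},V_\eps)$. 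The limit map can collapse points: there may well be $\z_\eps\in\crit(\pi_{i-1},V_\eps)$ infinitesimally close (over $\RRz\langle\eta\rangle$) to $\y$, so that $\limeps\y=\limeps\z_\eps\in\limeps\crit(\pi_{i-1},V_\eps)$ even though $\partial_{X_i}f(\y)$ is not infinitesimal. The non-membership condition is intrinsically a statement about \emph{all} points of $\crit(\pi_{i-1},V_\eps)$, not just about $\y$; the dimension bound on the limit set gives no constructive quantifier-free description of its complement within the stated complexity.

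The paper's proof handles precisely this obstruction by refusing to eliminate the universal quantifier locally. It introduces a \emph{second} block of variables $\Y$ ranging over $\crit(\pi_{i-1},V_\eps)\cap U_{\eta,\eps}\cap B_\eps$, adjoins a single free scalar $Z$ with the constraint $\|\X-\Y\|^2>Z^2$, $Z>0$, and performs one-block quantifier elimination over the $2(n{+}1)$ variables $(\X,\Y)$, leaving only $Z$ and the three infinitesimals free. Because the number of free variables is constant, Theorem~\ref{th:qe} still gives $(s\,\delta)^{O(n)}$. One then applies $\limeps$ to the resulting univariate description of $\pi_Z(A_\eps)$ and tests whether some $z>0$ survives (\func{UnivariateLimit} and \func{IsRealizable}). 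The equivalence proved is that a positive $z$ survives the limit if and only if some $\x_\eps\in\crit(\pi_i,V_\eps)\setminus\crit(\pi_{i-1},V_\eps)$ stays at non-infinitesimal distance from the whole of $\crit(\pi_{i-1},V_\eps)$, which is exactly the global statement your single-predicate approach cannot capture. (The special case $i=1$ is handled separately by \func{LimitsOfCriticalPoints}, and an initial \func{IsEmpty} call disposes of the case where $\crit(\pi_{i-1},V_\eps)\cap U_{\eta,\eps}\cap B_\eps$ is empty.)
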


We can now describe our main algorithm \func{ComputeRealDimension}. 

{\small
\begin{algorithm2e}[H]
 \caption{\FuncSty{{\sf ComputeRealDimension}}($F$, $G$)}
  \label{alg:RealDim}
  \SetKw{RET}{{\sc return}}
  \KwIn{$F =(f_1, \dots, f_p) $ and $G=(g_1, \ldots, g_s)$ in $\ZZ[X_1, \ldots, X_n]$}
  \KwOut{The real dimension of the semi-algebraic set defined by $f_1=\cdots=f_p=0, g_1>0, \ldots, g_s>0$}
  
  \BlankLine
  $f_0  \leftarrow \sum_i f_{i}^2$ \; \nllabel{alg:sos}
  \BlankLine

  \lIf{ {\sf IsEmpty($f_0$, $G$)} }{ \nllabel{alg:empty}
    \RET $-1$ \;
  }
  
  \BlankLine
  $f  \leftarrow f_0 + (\zeta (X_1^2 + \dots + X_n^2 +X_{n+1}^2)-1)^2$ \; \nllabel{alg:unbd}
  \BlankLine
  $ Q \leftarrow (\zeta (X_1^2 + \dots + X_n^2 +X_{n+1}^2)-2)$ \;
  \BlankLine

  $\mA \leftarrow {\sf Random}(n)$ \; \nllabel{alg:A-random}

  \BlankLine
  $f^{\mA}(\X) \leftarrow f( \mA\,\X)$ and   $G^{\mA}(\X) \leftarrow G( \mA\,\X)$ and $Q^{\mA}(\X) \leftarrow Q( \mA\,\X)$\; \nllabel{alg:change-var}



  \BlankLine

  \For{$n \geq i \geq 1$ }{
    
    \BlankLine
    
  

    \lIf{ {\sf DisjointPolar($f^\mA$, $G^\mA$, $Q^\mA$, $i$)} }{ \nllabel{alg:check-diff}
      \RET $i-1$ \;
    }
  }
\end{algorithm2e}
}

The proof of Theorem \ref{thm:main} consists of proving the
correctness and complexity estimate of \func{ComputeRealDimension}.

\paragraph*{Correctness.}

Let $d$ be the real dimension of the semi-algebraic set $\mathscr{S}\subset
\RR^n$ defined by $f_1=\cdots=f_p=0, g_1>0, \ldots, g_s>0$.

At Step~\ref{alg:empty} we test whether the semi-algebraic set is
empty or not.  In the sequel we assume that the semi-algebraic
set is not empty, and so $d\geq 0$.

We also denote by $V\subset \CCz^{n+1}$ the algebraic set defined by $f=0$
where $f$ is defined at Step \ref{alg:unbd}.  Assume for the moment
that the semi-algebraic set $S$ has real dimension $d$ and that $V\cap
\RRz^n$ is bounded.  At Step \ref{alg:change-var}, the real dimension
of the semi-algebraic set $S^\mA$, defined by $f^{\mA}=0$ and
$g_1^{\mA}>0, \ldots, g_s^{\mA}>0$, is also $d$.  Since $\mA$ is randomly
chosen we can assume that it lies in the non-empty Zariski open set
$\mathscr{O}$ defined in Proposition~\ref{prop:normalisgeneric}.
Therefore, $f^\mA$ satisfies property $\sf N$ (see
Definition~\ref{def:generic-position}) and $V^\mA\cap\RRz^{n+1}$ (and
consequently $S^\mA$) is bounded.  In other words, all the assumptions
of Corollary \ref{cor:Wi-diff} are satisfied.

In the for-loop, starting with $i=n$, the algorithm looks for the largest integer $i$ such that 
$$
\left (\limeps\crit(\pi_i, V^\mA_\eps)\right )\cap U^\mA 
\neq 
\left (\limeps\crit(\pi_{i-1}, V^\mA_\eps)\right )\cap U^\mA
\enspace.
$$

Each time we enter in the loop $\left (\limeps\crit(\pi_i,
  V^\mA_\eps)\right )\cap U^\mA$ is not empty (and actually equals
$S^\mA$).  The other assumptions of Lemma \ref{lem:disjoint-polar} are
obviously satisfied.  Also, by Corollary \ref{cor:Wi-diff} and Lemma
\ref{lem:disjoint-polar}, we deduce that when $d\geq 0$, at Step
\ref{alg:check-diff} the algorithm will return $d$.

\medskip To finish the proof it remains to establish that $S$ is
bounded and its real dimension is the same as the real dimension of
$\mathscr{S}$.

We start with the boundedness statement. The polynomial $f$ is the sum
of the squares of $f_1, \ldots, f_p$ and the square of the polynomial
$\zeta (X_1^2 + \dots + X_n^2 +X_{n+1}^2)-1$. The set of roots in
$\RRz^{n+1}$ of this latter polynomial is the $n$-dimensional sphere,
$\mathbb{S}_{n+1}$, with center at the origin and radius
$1/\sqrt{\zeta}$. It is straightforward that $V\cap\RRz^{n+1}$ is
bounded and since $S=V\cap U$, we deduce that $S$ is bounded.

Now, we prove that $S$ has the same real dimension, $d$,
as the semi-algebraic set $\mathscr{S}$. 
The proof consists of two steps. 
We prove consecutively that 
\begin{enumerate}
\item[{\em (i)}] $\ext(\mathscr{S}, \RRz)\subset \RRz^{n}$ has
real dimension $d$, and 
\item[{\em (ii)}] $S$ has the same real dimension
  as $\ext(\mathscr{S}, \RRz)\subset \RRz^{n}$.
\end{enumerate}

Statement {\em (i)} is a straightforward consequence of the Lemma below. 
\begin{lemma}
  \label{lem:S-eq-Sz}
  Let ${\cal S}$ be a semi-algebraic subset of $\RR^n$ and consider its
  extension $\ext({\cal S}, \RRz)$.  Then ${\cal S}$ and $ \ext({\cal S},
  \RRz^n)$ have the same real dimension.
\end{lemma}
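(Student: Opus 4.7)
The plan is to reduce the equality of real dimensions to an equality of Krull dimensions of complex algebraic sets, and then to invoke invariance of Krull dimension under extension of the base field. First I would let $W \subset \CC^n$ be the Zariski closure of ${\cal S}$ and $W' \subset \CCz^n$ be the Zariski closure of $\ext({\cal S}, \RRz)$. By \cite[Proposition 2.8.2]{BoCoRo98}, applied over $\RR$ and over $\RRz$ respectively, the real dimension of ${\cal S}$ equals $\dim W$ and the real dimension of $\ext({\cal S}, \RRz)$ equals $\dim W'$. It therefore suffices to show $\dim W = \dim W'$.

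For the inequality $\dim W' \leq \dim W$, I would observe that $W$ is cut out by finitely many polynomials in $\CC[X_1, \ldots, X_n]$, and that the same polynomials, read in $\CCz[X_1, \ldots, X_n]$, define an algebraic set $\ext(W, \CCz) \subset \CCz^n$ which trivially contains $\ext({\cal S}, \RRz)$; hence $W' \subseteq \ext(W, \CCz)$. Since the Krull dimension of an ideal in a polynomial ring is preserved under extension of the field of coefficients (a classical consequence of Noether normalization), $\dim \ext(W, \CCz) = \dim W$, and therefore $\dim W' \leq \dim W$.

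For the reverse inequality $\dim W' \geq \dim W$, I would appeal to Tarski--Seidenberg transfer applied directly to Definition~\ref{def:dim}. If $d = \dim({\cal S})$ and $\phi : (0,1)^d \to {\cal S}$ is an injective semi-algebraic map, its graph is described by a quantifier-free formula $\Psi$ with coefficients in $\RR$. The statements that $\Psi$ is the graph of a function from $(0,1)^d$ into ${\cal S}$ and that this function is injective are first-order in the language of ordered fields, and so by transfer they hold equally when $\Psi$ is interpreted over $\RRz$. Hence $\Psi$ defines an injective semi-algebraic map $\ext((0,1)^d, \RRz) \to \ext({\cal S}, \RRz)$, which yields $\dim \ext({\cal S}, \RRz) \geq d$, i.e.\ $\dim W' \geq \dim W$.

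The main obstacle will be the careful bookkeeping in the transfer step, ensuring that the quantifier-free formulas defining ${\cal S}$ and the graph of $\phi$ retain their intended meaning after extension, together with the clean invocation of base-change invariance of Krull dimension in the algebraic half of the argument; both are classical, so no genuinely new machinery is required.
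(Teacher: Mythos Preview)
Your proof is correct and takes a different route from the paper's in both directions.

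For $d'\le d$ (with $d=\dim\mathcal{S}$ and $d'=\dim\ext(\mathcal{S},\RRz)$), you pass through Zariski closures and invoke invariance of Krull dimension under base change. The paper instead produces an injective semi-algebraic map $\varphi:\mathcal{S}\to[0,1]^d$, extends it to $\RRz$ (extension preserves injectivity, \cite[Exercise~2.17]{BaPoRo06}), and reads off $d'\le d$. For $d\le d'$, you apply Tarski--Seidenberg transfer to the injection $(0,1)^d\to\mathcal{S}$ guaranteed by Definition~\ref{def:dim}; the paper instead observes $\lim_{\zeta\to 0}\ext(\mathcal{S},\RRz)=\mathcal{S}$ and invokes Lemma~\ref{lemma:dim:limites} (dimension does not increase under $\lim$). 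Curiously, your second step and the paper's first step are the same mechanism---transfer of an injective semi-algebraic map---applied in opposite directions. Your argument has the advantage of being self-contained (no need for Lemma~\ref{lemma:dim:limites}) and of working strictly from Definition~\ref{def:dim}, whereas the paper's first step implicitly uses the nontrivial fact that a $d$-dimensional semi-algebraic set injects semi-algebraically into $[0,1]^d$. The paper's argument, on the other hand, reuses the $\lim$ machinery already developed for the main results. Both proofs are short; yours is arguably the more elementary of the two.
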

\begin{proof}
  Let $d$ be the real dimension of $ {\cal S}$ and $d'$ be the real
  dimension of $\ext({\cal S}, \RRz)$. Then there is an injective map
  $\varphi: {\cal S} \rightarrow [0,1]^d$.  We consider the extension
  of $\varphi$, that is $\ext(\varphi,\RRz)$. It is also injective
  \cite[Exercise~2.17]{BaPoRo06} and by the definition of the real
  dimension (see Definition~\ref{def:dim}) we deduce that $d'\leq d$.
  Now remark that $\lim_{\zeta\rightarrow 0} \ext({\cal S},
  \RRz)={\cal S}$. By Lemma \ref{lemma:dim:limites} we deduce that
  $d\leq d'$ which finishes the proof.
\end{proof}

To prove {\em (ii)}, we consider the restriction of
$\pi_n$ to $S$. It is a semi-algebraic and injective function, since for all $(\x,
x_{n+1})\in \RRz^n\times \RRz$ with $\x\in \ext(\mathscr{S}, \RRz)$
and $(\x, x_{n+1})\in \mathbb{S}_{n+1}$,
$x_{n+1}=\sqrt{\frac{1}{\zeta}-||\x||^2}$. 
Moreover, 
$\pi_n(S)=\ext(\mathscr{S}, \RRz)$.

By \cite[Proposition~5.29]{BaPoRo06} we deduce that the real dimension
of $S$ is equal to  the real dimension of $\ext(\mathscr{S}, \RRz)$;
which is $d$ by {\em (i)}.  \hfill\qed

\paragraph*{Complexity Analysis.}
The complexity estimate is straightforward by
Lemmata~\ref{lem:subroutine-IsEmpty} and \ref{lem:disjoint-polar}.
The first call to \func{IsEmpty} costs at most $(s \, D)^{O(n)}$
arithmetic operations.  Next, the algorithm calls $n$ times the sub-routine
{\sf DisjointPolar} with input polynomials of total degree
$O(D)$. Each call costs $(s\,D)^{O(n)}$ operations in $\ZZ$.  Since
all functions in the complexity class $n(s\,D)^{O(n)}$ lie in the
complexity class $(s\,D)^{O(n)}$, we deduce that
\func{ComputeRealDimension} runs within $(s\,D)^{O(n)}$ arithmetic
operations in $\ZZ$.

This is a probabilistic bound because of the random change of
coordinates that we apply to $f$ and $G$ at Step~\ref{alg:change-var}
to ensure property \func{N} (Definition~\ref{def:generic-position}).

\subsection{The subroutine {\sf DisjointPolar} and Proof of Lemma \ref{lem:disjoint-polar}}
\label{sec:subroutines}

In order to describe \func{DisjointPolar}, we need to recall some
fundamental algorithmic specifications and complexity results in
computational real algebraic geometry. Sometimes we need to prove some
statements which are quite folklore in the area; proofs of these facts
are given in the Appendix.

\paragraph*{Quantifier Elimination over the reals.}
\label{sec:qe}
Let $\X=(X_1, \ldots, X_n)$ and $\Y = (Y_1, \dots, Y_t)$ be finite
sequences of variables, $\Omega$ be the existential quantifier $\exists$ or
the universal quantifier $\forall$ and $\mathcal{P}(\X, \Y)$ be a
Boolean function of $s$ atomic predicates $H_i(\X, \Y) \rhd_i 0$,
where $\rhd_i~\in~\{>~, <~, =~\}$ and $H_i\in \ZZ[\X, \Y]$, for $1
\leq i \leq s$.  We let $\delta=\max(\deg(H_i), 1\leq i \leq s)$ and
when $\ZZ=\mathbb{Z}$, $\tau$ is the maximum of the bit size of the
coefficients in the $H_i$'s

One-block quantifier elimination over the reals consists in computing
a quantifier-free formula which is equivalent to the first order
quantified formula $\Phi:  (\Omega \, \X \in \RR^{n}) \, \mathcal{P}(\X, \Y)$.

The following theorem is a simplification of the general purpose
quantifier elimination algorithm in \cite[Algorithm~14.5 and
Theorem~14.16]{BaPoRo06}.

\begin{theorem}[One Block Quantifier Elimination over the reals]
  \label{th:qe}
  There exists an algorithm \\$\func{OneBlockQuantifierElimination}$
  which takes as input $\Omega, {\cal P}, \X, \Y$ and returns a quantifier
  free formula $\Psi$ of the form 
  $ \bigvee_{i=1}^{I} \bigwedge_{j=1}^{J_i} h_{i,j} \rhd_{i,j} 0 $,
  where $h_{i,j} \in \ZZ[Y_1, \dots, Y_t]$ and 
  $\rhd_{i,j} \in \{ >,= \}$, that is equivalent to $\Phi$, such that
  $$
  I \leq s^{(t+1)} \,n\, \delta^{(t+1) O(n)}, \quad J_i \leq s^{(n+1)} \, \delta^{O(n)}, 
  \quad \deg(h_{i,j}) \leq \delta^{O(n)}
   \enspace .
  $$
  If $\ZZ=\mathbb{Z}$, then the maximum bit size of the coefficients of
  $h_{i,j}$ is bounded by $\tau \delta^{(t+1)O(n)}$.
  The above transformation requires $s^{(t+1)(n+1)} \delta^{(t+1)O(n)}$
  arithmetic operations in $\ZZ$.
  The Boolean complexity of the algorithm is $\tau s^{(t+1)(n+1)}
  \delta^{(t+1)O(n)}$.
\end{theorem}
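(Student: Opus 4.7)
The plan is to follow the block-elimination strategy of \cite[Chapter~14]{BaPoRo06}. I would first reduce the universal case $\Omega=\forall$ to the existential case by the equivalence $(\forall \X)\,\mathcal{P}(\X,\Y) \Leftrightarrow \neg (\exists \X)\,\neg\mathcal{P}(\X,\Y)$; negating the Boolean formula replaces each atom $H_i \rhd_i 0$ by a Boolean combination of at most two atoms (splitting $\ne 0$ as $>0\vee<0$), so the claimed bounds on $I$, $J_i$, degrees and complexity are preserved up to the hidden $O$-constants, and the final $\neg \Psi$ can be brought back to DNF with the same asymptotic bounds. From now on only the existential case needs to be handled.

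For $(\exists \X)\,\mathcal{P}(\X,\Y)$, I would view $H_1(\X,\Y),\ldots,H_s(\X,\Y)$ as polynomials in $\X$ with coefficients in $\ZZ[\Y]$ and apply the parameterized sample-point construction of Basu--Pollack--Roy. After a generic linear change of coordinates on $\X$ and an infinitesimal deformation that puts the system in Noether position, one computes, for each subfamily of at most $n+1$ polynomials from $(H_i)$, the critical points of a suitable distance function restricted to the associated parameterized variety. This produces a family of parameterized univariate representations whose specializations at any $\y\in\RR^t$ intersect every semi-algebraically connected component of every realizable sign condition of $(H_1(\cdot,\y),\ldots,H_s(\cdot,\y))$ on $\RR^n$. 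The resultants, subresultants and discriminants that arise in this construction collectively form a family $\mathcal{H}\subset\ZZ[\Y]$ of size at most $s^{n+1}\delta^{O(n)}$ and degree at most $\delta^{O(n)}$ in $\Y$, whose sign conditions stratify $\RR^t$ into cells over which the combinatorial type of the parameterized sample points is constant.

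Next, for each realizable sign condition $\sigma$ on $\mathcal{H}$, the truth value of $(\exists\X)\,\mathcal{P}(\X,\Y)$ is constant on the corresponding cell and can be decided symbolically by evaluating $\mathcal{P}$ at a representative parameterized sample point; the output $\Psi$ is then the disjunction, over those $\sigma$ yielding ``true'', of the conjunctions of atoms of $\sigma$. The Oleinik--Petrovskii--Thom--Milnor bound applied to $\mathcal{H}$ in $\RR^t$ (see \cite[Chapter~7]{BaPoRo06}) bounds the number of realizable sign conditions by $s^{t+1}\,n\,\delta^{(t+1)O(n)}$, giving the bound on $I$; each conjunction has at most $s^{n+1}\delta^{O(n)}$ atoms of degree $\delta^{O(n)}$ in $\Y$, giving the bounds on $J_i$ and $\deg(h_{i,j})$. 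The arithmetic complexity $s^{(t+1)(n+1)}\delta^{(t+1)O(n)}$ is obtained by multiplying the $s^{n+1}\delta^{O(n)}$ cost of each parameterized critical-point computation by the $s^{t+1}\delta^{(t+1)O(n)}$ cost of stratifying the parameter space and testing each cell.

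The main obstacle is the careful bookkeeping of two distinct exponents. The $O(n)$ factor in $\delta^{(t+1)O(n)}$ is the unavoidable cost of eliminating the $n$ variables $\X$ via the critical-point construction, while the factor $(t+1)$ arises from the $t+1$ successive generic projections of $\mathcal{H}$ onto coordinate subspaces of $\RR^t$ needed to control the cardinality of realized sign conditions; each of these projections multiplies the degree in $\Y$ by a factor of $\delta^{O(n)}$. The Boolean complexity bound $\tau\,s^{(t+1)(n+1)}\delta^{(t+1)O(n)}$ when $\ZZ=\mathbb{Z}$ then follows by tracking that each such elimination round multiplies the coefficient bit-size by a factor of $\delta^{O(n)}$, so after $t+1$ rounds the bit-size is $\tau\,\delta^{(t+1)O(n)}$ as claimed.
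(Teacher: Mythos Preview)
The paper does not prove this theorem at all: it is stated as a direct citation of \cite[Algorithm~14.5 and Theorem~14.16]{BaPoRo06}, with the remark that the present formulation is ``a simplification of the general purpose quantifier elimination algorithm'' given there. So there is no argument in the paper to compare your proposal against; the authors treat Theorem~\ref{th:qe} as a black box imported from Basu--Pollack--Roy.

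Your sketch is a reasonable high-level account of the block-elimination method in \cite[Chapter~14]{BaPoRo06}, and in that sense it is faithful to the source the paper relies on. Two small caveats. First, your explanation of the factor $(t+1)$ as coming from ``$t+1$ successive generic projections of $\mathcal{H}$ onto coordinate subspaces of $\RR^t$'' is not how the exponent arises in \cite{BaPoRo06}: the $(t+1)$ comes directly from the bound on the number of realizable sign conditions of a family of polynomials in $t$ free variables (the Sampling Theorem / sign-condition bound), not from an iterated projection scheme. Second, the specific numerical form of the bounds as stated here is the authors' own simplification of \cite[Theorem~14.16]{BaPoRo06}, so attempting to re-derive those exact exponents from first principles is unnecessary for the purposes of this paper; a pointer to the cited theorem is all that is required.
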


\paragraph*{Limits of semi-algebraic sets in $\RRzhe$ when $\eps\to 0$.}
\label{sec:limit}
The routine \func{UnivariateLimit} takes as input a quantifier-free
formula $\Psi$, which is a disjunction of conjunctions of polynomial
equations/inequalities in $\ZZ[\zeta,\eta,\eps][Z]$, and the
infinitesimal $\eps$. This formula defines a semi-algebraic set ${\cal
  S}_{\eps}$ in $\RRzhe$.  It outputs a semi-algebraic description of
$\limeps{\cal S}_{\eps}$. This is based on quantifier elimination over
the reals \cite[Theorem~14.16]{BaPoRo06}.

\begin{lemma}
  \label{lem:subroutinelimit}
  Let $\delta$ be the maximum of the degrees of the momonials in
  $ \zeta, \eta, \eps$ of the polynomials in $\Psi$ and $\ell$ be the
  number of polynomials in $\Psi$.  One can compute
  $\func{UnivariateLimit}(\Psi,\eps)$ within $(\ell \,\delta)^{O(1)}$
  arithmetic operations in $\ZZ$.

  When $\ZZ=\mathbb{Z}$ and $\tau$ is a bound on the bit size of the
  integers of the coefficients in $\Psi$, then ${\sf
    UnivariateLimit}(\Psi, \eps)$ runs within $\tau (\ell
  \,\delta)^{O(1)}$ bit operations.
\end{lemma}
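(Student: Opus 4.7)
My plan is to reduce the computation of $\limeps \mathcal{S}_\eps$ to a quantifier elimination problem over the reals with a bounded number of variables and quantifier alternations, and then invoke the quantifier elimination algorithm of Theorem~\ref{th:qe} (or rather its bounded-block generalization from \cite[Theorem~14.16]{BaPoRo06}). The key observation is that $\mathcal{S}_\eps$ is univariate in $Z$, so even after adding an auxiliary copy of $Z$ and one radius parameter, the total number of variables remains $O(1)$, which is what ultimately controls the complexity.

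Following the construction used in the proof of Lemma~\ref{lemma:dim:limites}, I would treat $\eps$ as an ordinary positive variable (rather than as an infinitesimal) and introduce
\[
T = \{(Z,\eps)\in\RR^2 : \Psi(Z,\zeta,\eta,\eps) \wedge \eps>0\},
\]
so that $\limeps\mathcal{S}_\eps$ is the projection onto $Z$ of $\overline{T}\cap\{\eps=0\}$. Membership $z\in\limeps\mathcal{S}_\eps$ can then be encoded by the first-order formula
\[
\forall r>0,\ \exists Z',\ \exists \eps'>0 : \Psi(Z',\zeta,\eta,\eps') \wedge (Z'-z)^2+(\eps')^2 < r^2,
\]
which has two quantifier blocks containing one and two variables respectively, with $z,\zeta,\eta$ as free parameters.

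Applying the bounded-block quantifier elimination algorithm of \cite[Theorem~14.16]{BaPoRo06} converts this formula into a quantifier-free Boolean combination $\widetilde\Psi$ of polynomial sign conditions on $Z,\zeta,\eta$. Plugging $n=O(1)$ quantified variables and $t=O(1)$ free variables into the complexity bound of Theorem~\ref{th:qe}, together with $s=O(\ell)$ atomic predicates of total degree $O(\delta)$, yields an arithmetic cost of $(\ell\,\delta)^{O(1)}$ operations in $\ZZ$. The bit complexity statement $\tau(\ell\,\delta)^{O(1)}$ when $\ZZ=\mathbb{Z}$ follows immediately from the bit-complexity variant of the same theorem.

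The main technical point, which I would expect to be the most delicate step, is justifying that performing quantifier elimination with $\eps$ regarded as an ordinary positive variable (instead of an infinitesimal over $\RRzh$) produces a formula that indeed defines $\limeps\mathcal{S}_\eps$ in the Puiseux series field $\RRzhe$. This rests on the semi-algebraic description of the graph of $\limeps$ established in the proof of Lemma~\ref{lemma:dim:limites}: since the output $\widetilde\Psi$ has coefficients in $\ZZ[\zeta,\eta]$, its solution set extends unchanged to $\RRzhe$, and the curve selection lemma together with the transfer principle \cite[Chapter~2]{BaPoRo06} identifies it with $\limeps\mathcal{S}_\eps$.
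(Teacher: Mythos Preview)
Your proposal is correct and follows essentially the same approach as the paper: express $\limeps\mathcal{S}_\eps$ by a first-order formula over the reals, treat $\zeta,\eta,\eps$ as ordinary variables, and eliminate quantifiers via \cite[Theorem~14.16]{BaPoRo06}, exploiting that only $O(1)$ variables and $O(1)$ quantifier blocks are involved. The only cosmetic difference is that you encode the limit through the closure characterization $\overline{T}\cap\{\eps=0\}$ (a $\forall\exists$ formula), whereas the paper writes out the equivalent $\eps_0$-style definition $(\forall r)(\exists\eps_0)(\forall\eps)(\forall z)\ldots$; both are standard and yield the same $(\ell\,\delta)^{O(1)}$ bound.
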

\proofatend
   The quantifier free formula of the input represents the following semi-algebraic set
  \begin{displaymath}
    {\cal S}_{\eps} = \{ z \in \RRzhe \,|\, \Psi \} \enspace ,
  \end{displaymath}
  where $\Psi$ is a Boolean formula whose atoms are polynomials in $Z$
  with coefficients in $\RRzhe$.
  We can express the limit of ${\cal S}_{\eps}$ as $\eps \rightarrow 0$, 
  ${\cal S}$, using the language of the first order theory over
  the reals \cite[Sec.~3.1]{BaPoRo06}, that is
  \begin{displaymath}
    {\cal S} = \limeps {\cal S}_{\eps} = 
    \{
    z' \in \RRz\langle\eta\rangle \,|\, 
    (\forall r > 0) (\exists \eps_0 > 0)  (\forall \eps) \,
    (\forall z) \, z\in {\cal S}\wedge 
    (0 < \eps < \eps_0 \Rightarrow \| z - z' \|^2 < r^2)
    \}
    \enspace .
  \end{displaymath}
  In this way, to compute the limit it suffices to eliminate the
  quantifiers from the previous formula. 

  For the elimination process, we treat $\zeta$ as a variable. To see
  that this is valid let $\Phi$ be the Boolean formula the describes $
  z\in {\cal S}\wedge (0 < \eps < \eps_0 \Rightarrow \| z - z' \|^2 <
  r^2)$.  Then
  \begin{displaymath}
    A_{\zeta,\eta} = \{ (z, z', r, \eps_0, \eps) \in \RRz\langle\eta\rangle^5 \,|\, \Phi \} \enspace.
  \end{displaymath}
  If we let $\zeta = t$ and $\eta=h$ in all the polynomials in $\Phi$, then 
  we get the set 
  \begin{displaymath}
    A = \{  \{ (z, z', r, \eps_0, \eps, t, h) \in \RR^6 \,|\, \Phi \}\} 
    \enspace.
  \end{displaymath}
  In this way 
  \begin{displaymath}
    \pi_{-t}( \ext(A, \RRz\langle\eta\rangle) \inter \{t=\zeta, h=\eta\}) = A_{\zeta,\eta} \subset \RRz\langle\eta\rangle^5
    \enspace, 
  \end{displaymath}
  where $\pi_{-th}$ is the projection 
  $(z, z', r, \eps_0, \eps, t,h)  \mapsto (z, z', r, \eps_0, \eps)$.

  We notice that ${\cal S}$ is defined using a constant number of free
  variables, ($\zeta$, $\eta$, and $Z'$), a constant number of quantified
  variables, ($z$, $r$, $\eps_0$, $\eps$), and a constant number of
  quantifier alternations.  Following \cite[Theorem~14.16]{BaPoRo06},
  after we perform quantifier elimination, the output is a
  quantifier-free formula $\bigvee_{i=1}^{I}\bigwedge_{j=1}^{J_i}
  h_{i,j} \rhd_{i,j} 0$, where $h_{i,j} \in \ZZ[\zeta,\eta][Z']$ have
  degree at most $\delta^{O(1)}$, $I \leq (\ell \, \delta)^{O(1)}$,
  $J_i \leq (\ell \, \delta)^{O(1)}$, and the complexity of the
  quantifier elimination procedure is $(\ell \, \delta)^{O(1)}$
  arithmetic operations in $\ZZ$.

  If $\ZZ = \mathbb{Z}$ and the input polynomials have coefficients of
  maximum bit size $\tau$, then the Boolean complexity is 
  $\tau \,(\ell\, \delta)^{O(1)}$.
\endproofatend

\paragraph*{Computing limits of critical points.}
The routine \func{LimitsOfCriticalPoints} takes as input a polynomial
$H$ and a set of polynomials ${\cal G}=(G_1, \ldots, G_s)$ in
$\ZZ[\zeta][X_1, \ldots, X_n]$. We denote by $V_\eps\subset \CCze^n$
the algebraic set defined by $H-\eps=0$, by $U\subset \RRz^n$ the open
semi-algebraic set defined by $G_1>0, \ldots, G_s>0$ and by $\delta$
the maximum of the degrees of the monomials in $\zeta, X_1, \ldots,
X_n$ appearing in $H$ and ${\cal G}$. If $\ZZ=\mathbb{Z}$, then we
denote by $\tau$ the maximum bit size of the integers appearing in $H$
and ${\cal G}$. It is based on \cite[Algorithm~13.1]{BaPoRo06} and
\cite[Algorithm~11.20]{BaPoRo06}.
\begin{lemma}\label{lemma:limitsoffiniteset}
  Assume that $H$ is non-negative over $\RRz^n$ and that $\crit(\pi_1,
  V_\eps)$ is finite.  There exists an algorithm
  \func{LimitsOfOfCriticalPoints} which takes as input $H$ and ${\cal
    G}$ as above, and returns \func{True} if $\left (\limeps\crit(\pi_1,
  V_\eps)\right )\inter U$ is non-empty else it returns \func{False}, within $(s
  \,\delta)^{O(n)}$ arithmetic operations in $\ZZ$.

  When $\ZZ=\mathbb{Z}$ and $\tau$ is the maximum bit size of the coefficients,
  then the Boolean complexity of  
  \func{LimitsOfCriticalPoints} is 
  $\tau \, (s \, \delta)^{O(n)}$.
\end{lemma}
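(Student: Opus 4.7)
The plan is to reduce the problem to zero-dimensional real solving of an explicit polynomial system, followed by coordinate-wise limit computation as $\eps \to 0$ and sign tests against $G_1,\ldots,G_s$.

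First, I would assemble the polynomial system
\[
H - \eps \;=\; \frac{\partial H}{\partial X_2} \;=\; \cdots \;=\; \frac{\partial H}{\partial X_n} \;=\; 0,
\]
whose solution set in $\CCze^n$ is by definition $\crit(\pi_1, V_\eps)$. By the hypothesis of the lemma this set is finite, so the system, viewed as having coefficients in $\ZZ[\zeta,\eps]$ with $n$ variables and degree at most $\delta$, is zero-dimensional. Applying a standard real solving routine such as \cite[Algorithm~13.1]{BaPoRo06} then yields a univariate representation of its real points: polynomials $q, q_1, \ldots, q_n \in \ZZ[\zeta,\eps][T]$ of degree $\delta^{O(n)}$, together with Thom encodings of the real roots of $q$, such that the real solutions are given by $(q_1(t^*)/q'(t^*), \ldots, q_n(t^*)/q'(t^*))$ as $t^*$ ranges over the real roots of $q$. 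The cost is $\delta^{O(n)}$ arithmetic operations in $\ZZ$.

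Next, for each such encoded real root I would apply the subroutine \func{UnivariateLimit} of Lemma~\ref{lem:subroutinelimit} to each coordinate $q_i(T)/q'(T)$, in order to compute its limit as $\eps \to 0$. Reassembling these coordinate-wise limits produces a univariate representation, now with coefficients in $\ZZ[\zeta]$, of the finite set $\limeps \crit(\pi_1, V_\eps) \cap \RRz^n$. By Lemma~\ref{lem:subroutinelimit}, this step is negligible in the overall complexity.

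Finally, for each of these limit points and each polynomial $G_j$, I would test the sign of $G_j$ using univariate sign determination as in \cite[Algorithm~11.20]{BaPoRo06}, applied to the specialization of $G_j$ along the representation. The algorithm returns \func{True} as soon as one limit point satisfies $G_1 > 0, \ldots, G_s > 0$ simultaneously, and \func{False} otherwise; correctness is immediate from the specifications of the three subroutines. For the cost, real solving contributes $\delta^{O(n)}$ operations and at most $\delta^{O(n)}$ sample points; at each sample, $s$ sign evaluations of polynomials of degree at most $\delta$ cost $(s\,\delta)^{O(n)}$. The bit complexity bound $\tau\,(s\,\delta)^{O(n)}$ follows by tracking bit sizes through the same pipeline. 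The main technical nuisance, rather than a true obstacle, is that the real solving is carried out with parametric coefficients in the tower $\ZZ[\zeta,\eps]$; one must check that treating $\zeta$ and $\eps$ as variables inside BPR's framework only inflates intermediate degrees and bit sizes polynomially, and in particular does not spoil the $O(n)$ exponent.
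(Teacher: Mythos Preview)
Your approach matches the paper's: compute a rational parametrization of $\crit(\pi_1,V_\eps)\cap\RRze^n$ via \cite[Algorithm~13.1]{BaPoRo06}, take limits as $\eps\to 0$, then determine signs of the $G_j$ at each limit point. The only discrepancies are in the tools cited for the last two steps: the paper uses \cite[Algorithm~11.20]{BaPoRo06} (Removal of an Infinitesimal, which acts directly on a univariate representation) for the limit step rather than \func{UnivariateLimit}, and \cite[Algorithm~10.13]{BaPoRo06} for sign determination---you have essentially swapped these two references, and applying \func{UnivariateLimit} ``to each coordinate $q_i(T)/q'(T)$'' is slightly off its specification (it takes a formula, not a rational expression), but neither point affects the argument or the complexity.
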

\proofatend
  By assumption, $\crit(\pi_1, V_\eps)$ is finite and $H$ is
  non-negative over $\RRz^n$. Thus, to compute $\left (\limeps \crit(\pi_1,
  V_\eps)\right )\inter U$ it is sufficient to
  \begin{enumerate}
  \item[(1)] compute sample points in each connected component of
    $\crit(\pi_1, V_\eps)\cap\RRze^n$; they will be encoded with a
    rational parametrization with coefficients in $\ZZ[\zeta,\eps]$
$$q(T)=0, X_1=q_1(T)/q_0(T), \ldots, X_n=q_n(T)/q_0(T)
;$$
  \item[(2)] use this parametrization to compute their limits and make
    the intersection with $U$.
  \end{enumerate}


  The above parametrization is obtained using
  \cite[Algorithm~13.1]{BaPoRo06} with input $H-\eps=\frac{\partial
    H}{\partial X_2}=\cdots=\frac{\partial H}{\partial
    X_n}=0$. Following {\em mutatis mutandis} the same reasoning as in
  \cite[Proposition~2.2]{v-ld-jsc-1999}, we deduce that the arithmetic
  cost in $\ZZ$ of this step is $(s\, \delta)^{O(n)}$.

  Step (2) is performed using with \cite[Algorithm~11.20]{BaPoRo06}
  (Removal of an infinitesimal) for compute the limit of the sample
  points as $\eps \rightarrow 0$ and sign determination algorithms for
  univariate polynomials to obtain the intersection with $U$ (see
  \cite[Algorithm 10.13]{BaPoRo06}).  The overall complexity is $(s
  \,\delta)^{O(n)}$ operations in $\ZZ$ \cite[Chapters 10 and
  11]{BaPoRo06}.
  
  The Boolean complexity bound is straightforward from the above
  reasoning.
\endproofatend

\paragraph*{The routine \func{IsRealizable}.}
\label{sec:realizable}

A typical call of this routine is $\func{IsRealizable}( \Phi)$, where
$\Phi$ is a union of disjunctions of univariate polynomials in $Z$
with coefficients in $\ZZ[\zeta,\eta]$.

The subroutine calls Algorithm~10.13 from \cite{BaPoRo06} to compute
all realizable sign condition of the polynomials and checks if there
is at least one sign condition that is compatible with $\Phi$.  In
this case it returns \func{True}, otherwise it returns \func{False}.

\begin{lemma}\label{lemma:IsRealizable}
  Let $\Phi$ be a union of disjunctions of univariate polynomials in
  $Z$ with coefficients in $\ZZ[\zeta,\eta]$ such that all monomials in
  $Z,\zeta,\eta$ appearing in $\Phi$ have degree at most $\delta$.
  
  The complexity of $\func{IsRealizable}(\Phi)$ is $\ell \,\delta^{O(1)}$
  operations in $\ZZ$.
  If $\ZZ = \mathbb{Z}$ and the maximum bit size of the coefficients of the
  polynomials of the input is $\tau$, then the Boolean complexity is
  $\tau\, \ell \, \delta^{O(1)}$.
\end{lemma}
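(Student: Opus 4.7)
The plan is to reduce \func{IsRealizable}$(\Phi)$ to sign determination for a finite family of univariate polynomials over an ordered ring, then invoke \cite[Algorithm~10.13]{BaPoRo06}. Write $\Phi = \bigvee_{i=1}^I \bigwedge_{j=1}^{J_i} \sigma_{i,j}$, where each atom $\sigma_{i,j}$ is of the form $P_{i,j}(Z) \rhd_{i,j} 0$ with $P_{i,j} \in \ZZ[\zeta,\eta][Z]$ and $\rhd_{i,j}\in\{<,=,>\}$. Let $\mathcal{P}=\{P_{i,j}\}$, which contains at most $\ell$ distinct polynomials of degree $\leq \delta$ in $Z$.

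First I would call \cite[Algorithm~10.13]{BaPoRo06} on the family $\mathcal{P}$, working over the ordered ring $\ZZ[\zeta,\eta]$ (embedded in the real closed field $\RRzh$). This produces the list $\mathcal{L}$ of all sign conditions on $\mathcal{P}$ realized at some $z \in \RRzh$. According to \cite[Chapter~10]{BaPoRo06}, the algorithm performs $\ell \, \delta^{O(1)}$ arithmetic operations in the coefficient ring. Then \func{IsRealizable} returns \func{True} iff some $\sigma \in \mathcal{L}$ satisfies $\bigwedge_{j=1}^{J_{i}}\sigma_{i,j}$ for some $i$; this final combinatorial check has cost $\ell^{O(1)}$.

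The key point to justify is that arithmetic and sign determination on coefficients, which live in $\ZZ[\zeta,\eta]$, fit in the claimed bound. Every intermediate coefficient produced by Algorithm~10.13 is a polynomial in $\zeta,\eta$ whose total degree is bounded by $\delta^{O(1)}$ (this is a standard propagation-of-degree analysis through the subresultant-based subroutines used in sign determination), so it has $\delta^{O(1)}$ monomials; addition and multiplication therefore cost $\delta^{O(1)}$ operations in $\ZZ$. Sign determination in $\RRzh$ for such an element is performed by a lexicographic inspection of its leading terms in $\eta$ and then in $\zeta$, which again reduces to $\delta^{O(1)}$ operations in $\ZZ$. Combining with the $\ell\,\delta^{O(1)}$ calls of Algorithm~10.13 to the coefficient arithmetic yields the total bound $\ell\,\delta^{O(1)}$ operations in $\ZZ$.

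For the Boolean complexity, when $\ZZ=\mathbb{Z}$ and the input has coefficients of bit size $\leq\tau$, the coefficients appearing during the execution of Algorithm~10.13 have bit size $\tau\,\delta^{O(1)}$ (cf.~\cite[Chapter~8 and 10]{BaPoRo06}), and each arithmetic operation in $\mathbb{Z}[\zeta,\eta]$ on such coefficients has Boolean cost $\tau\,\delta^{O(1)}$; multiplying by the $\ell\,\delta^{O(1)}$ operations gives $\tau\,\ell\,\delta^{O(1)}$. The main obstacle will be the careful bookkeeping on coefficient growth in $\zeta$ and $\eta$ through the sign determination routine; this is where one has to invoke the precise bit-complexity statements of \cite[Chapters~8,10]{BaPoRo06} rather than just the arithmetic ones.
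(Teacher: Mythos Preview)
Your proposal is correct and follows essentially the same approach as the paper: both reduce \func{IsRealizable} to \cite[Algorithm~10.13]{BaPoRo06}, then control the cost of arithmetic in $\ZZ[\zeta,\eta]$ by bounding the degree growth in $\zeta,\eta$ through the subresultant machinery (the paper cites \cite[Proposition~8.49]{BaPoRo06} for this, which is exactly the ``propagation-of-degree'' step you sketch). Your write-up is in fact slightly more explicit than the paper's, in that you spell out the lexicographic sign test for elements of $\ZZ[\zeta,\eta]$ and the final combinatorial matching of realizable sign conditions against the clauses of~$\Phi$.
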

\proofatend
  The subroutine is based on Algorithm~10.13 from \cite{BaPoRo06}.  If
  the degree of the polynomials is at most $\delta$ and if there are
  at most $\ell$ polynomials, then the total cost is $\ell \,\delta^{O(1)}$
  arithmetic operations in $\ZZ[\zeta,\eta]$ \cite{BaPoRo06}. 

  The operations that we need are computations of subresultant sequences for univariate
  polynomials in $Z$ with coefficients in $\ZZ[\zeta,\eta]$.  
  The coefficients of the polynomials in the sequence have degree at most $\delta^{O(1)}$ in $\zeta$ and $\eta$
  \cite[Proposition~8.49]{BaPoRo06}.
  Therefore
  the complexity of the algorithm is also $\ell \,\delta^{O(1)}$,
  when we count operations in $\ZZ$.
  
  The Boolean complexity is $\tau\, \ell \, \delta^{O(1)}$ and it is
  due to the bit size of the integers that appear in the subresultant
  sequence.
\endproofatend
\paragraph*{The routine \func{DisjointPolar}.}
\label{sec:disjoint-pair}
Let $f\in \ZZ[\zeta][X_1, \ldots, X_n]$, a set of polynomials $G
=(g_1,\ldots, g_s)\subset \ZZ[X_1, \ldots, X_n]$, a polynomial $Q\in
\ZZ[\zeta][X_1, \ldots, X_n]$ and an integer $i$, such that $1 \leq i
\leq n$. 

The open semi-algebraic set of $\RRz^n$ defined by $g_1 >0, \dots, g_s
> 0$ is denoted by $U $.  We introduce another infinitesimal
$\eta$, with $\zeta >\eta >0$, and we consider the semi-algebraic set
$U_\eta\subset \RRz\langle \eta\rangle^n$ defined by $g_1\geq \eta,
\ldots, g_s\geq \eta$. We also denote the latter inequalities by $G
\geq \eta$. Note that $U_\eta$ is closed.

In the sequel, we denote $\RRz\langle \eta\rangle$ by $\R$ and
$\CCz\langle \eta\rangle$ by $\C$. 

Finally, we introduce one more infinitesimal $\eps$ with
$\zeta > \eta > \eps$.
Let $V_\eps\subset \Ce^n$ be the algebraic variety 
defined by $f-\eps=0$. To make the notation simpler, $\ext(U_\eta, \Re)$
will be denoted by $U_{\eta, \eps}$ and $\ext(B, \Re)$ will be denoted
by $B_\eps$.

We assume that
\begin{itemize}
\item $f$ satisfies property \func{N}
(Definition~\ref{def:generic-position}) and is non-negative over
$\RRz^n$;
\item that $Q\leq 0$ defines a ball $B$ that strictly contains the
  semi-algebraic set defined by $f=0$ and $g_1>0, \ldots, g_s>0$.
\item and the solution set of $g_1>0, \ldots, g_s>0$ has a non-empty
  intersection with $\crit(\pi_i, V_\eps)$;
\end{itemize}

\noindent 
Then, $\func{DisjointPolar}(f, G, Q, i)$ returns {\sf True} if $\left
  (\limeps \crit( \pi_i, V_{\eps})\right ) \inter U \not=\left (
  \limeps \crit( \pi_{i-1}, V_{\eps})\right ) \inter U$, and {\sf
  False} otherwise. Before giving a detailed description, we briefly
give a geometric view of the operations performed by
\func{DisjointPolar}.

The algorithm works as follows. We consider separately the case $i=1$ (Step~\ref{alg:ifieq1}).  By
convention $\crit(\pi_0, V_{\eps})$ is the empty set.  Therefore,
$\left (\limeps \crit( \pi_{0}, V_{\eps})\right ) \inter U =
\emptyset$ and if $i=1$, then it suffices to check whether $\left
  (\limeps \crit(\pi_1, V_\eps)\right )\inter U$ is empty or not.

When $i\geq 2$, the algorithm starts by testing if $\crit(\pi_{i-1},
V_\eps)\inter U_{\eta, \eps}$ is empty. We will see that when this is
the case, it implies that $(\limeps\crit(\pi_{i-1}, V_\eps))\inter U$
is empty. Since by assumption $(\limeps\crit(\pi_{i-1}, V_\eps))\inter
U$ is not empty, the routine returns {\sf True}.

When $\crit(\pi_{i-1}, V_\eps)\inter U_{\eta, \eps}$ is empty, the
routine constructs a formula that defines the following semi-algebraic
set
\begin{displaymath}
  \begin{aligned}
    A_{\eps} = \{ (\x, \y, z) \in \Re^{2n+1} &\quad |&& 
    \x\in \left (\crit(\pi_{i}, V_\eps) - \crit(\pi_{i-1}, V_\eps) \right )\inter U_{\eta,\eps}\inter B_\eps,  \\
    &&&\y\in \crit(\pi_{i-1}, V_\eps)\inter U_{\eta,\eps}\inter B_\eps,\, ||x-y||=z > 0\}
     \enspace .
  \end{aligned}
\end{displaymath}

Next, we use quantifier elimination to compute a semi-algebraic
description of the semi-algebraic set $\limeps\pi_Z(A_\eps)\subset \R$ (where $\pi_Z$ is the projection
on the $Z$-coordinate). We will prove that
$$\left (\limeps \crit( \pi_i, V_{\eps}) \right )\inter U \neq
\left (\limeps \crit( \pi_{i-1}, V_{\eps})\right ) \inter U
\Longleftrightarrow \limeps\pi_Z(A_\eps) \inter \{z>0\}\neq
\emptyset\enspace .$$

{\small
\begin{algorithm2e}[h]
  \caption{\FuncSty{\sf DisjointPolar}($f, G, Q, i$)}
  \label{alg:DisjointPolar}
  \SetKw{RET}{{\sc return}}
  \SetKwInput{KwData}{Assumptions}
  \KwIn{
    $f \in \ZZ[\zeta][X_1, \ldots, X_n]$, 
    $G=(g_1, \ldots, g_s)\subset \ZZ[X_1, \ldots, X_n]$, $Q\in \ZZ[\zeta][X_1, \ldots, X_n]$,
    and 
    $i$  }
  \KwOut{{\sf True} if $\left (\limeps \crit( \pi_i, V_{\eps}) \right )\inter U \not= \left (\limeps \crit( \pi_{i-1}, V_{\eps}) \right ) \inter U$.
    {\sf False} otherwise.}

  \KwData{$f$ satisfies $\sf N$ and  is non-negative over $\RRz^n$,
    $Q \leq 0$ defines a ball strictly containing $S$,
    $\left (\limeps\crit(\pi_i, V_\eps)\right )\inter U \not= \emptyset$,
    and $1 \leq i \leq n$}
  
  \BlankLine
  \lIf{$i=1$} {\nllabel{alg:ifieq1} 
     \RET $\func{LimitsOfCriticalPoints}(f, G)$
  }
    
  \BlankLine
  $P_1 \leftarrow \{ \left (f(\X)-\eps\right )^2+\left ( \frac{\partial f(\X)}{\partial X_{i+1}}\right )^2+\cdots+ \left ( \frac{\partial f(\X)}{\partial X_n}\right )^2=0 \wedge G \geq \eta \wedge Q\leq 0\}$ \; \nllabel{alg:F-def}
  
  \BlankLine 
  $P_2 \leftarrow \func{Subs}(\X=\Y,\{\left (f(\X)-\eps\right )^2+\left ( \frac{\partial f(\X)}{\partial X_{i}}\right )^2+\cdots+ \left ( \frac{\partial f(\X)}{\partial X_n}\right )^2=0  \wedge G \geq \eta \wedge Q\leq 0\})$ \; \nllabel{alg:G-def}

    \lIf{ $\func{IsEmpty}(P_2)$ }{ 
      \RET {\sf False} \; \nllabel{alg:emptytest}
    }

  \BlankLine  
  $\Phi \leftarrow 
  \, 
  [\, P_1(\X)  \,\wedge\,  P_2(\Y)  \,\wedge\, 
  \frac{\partial f (\X)}{\partial X_{i}} \not= 0
  \,\wedge\,  \| \X- \Y\|^2 > Z^2  \,\wedge\,  Z > 0 \,]$ \;  \nllabel{alg:FO}

  \BlankLine

  $\Psi \leftarrow {\sf OneBlockQuantifierElimination}(\exists, \Phi, [\X, \Y], [Z, \zeta, \eta, \eps])$ \; \nllabel{alg:QE}

  \BlankLine
  \CommentSty{/* {$\Psi = [\Psi_1, \dots, \Psi_I]$} */}

  \BlankLine

  $\widetilde \Psi \leftarrow [{\sf UnivariateLimit}(\Psi_1,\eps), \dots {\sf Limit}(\Psi_I, \eps)]$ \; \nllabel{alg:limit}
  
  \BlankLine
  \For{$1 \leq k \leq I$ }{   
    \BlankLine
    \lIf{ {\sf IsRealizable}$(\widetilde \Psi_k \wedge (Z >0))$ }{ 
      \RET {\sf True} \; \nllabel{alg:realizable}
    }
     
  }
  \RET {\sf False} \;
\end{algorithm2e}
}

For the correctness proof of \func{DisjointPolar} we will need the following lemma.
\begin{lemma}
  \label{lem:exists-x}
  Assume that $f$ is non-negative over $\R^n$ and satisfies $\func{N}$.
  If $\x\in \left (\limeps \crit(\pi_{i}, V_\eps)\right ) \inter \R^n$, 
  then there exists $\x_\eps \in \crit(\pi_{i},V_\eps) \inter \Re^n$ 
  such that $\limeps\x_\eps=\x$. 
\end{lemma}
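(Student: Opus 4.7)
The plan is to parallel the argument used in the second direction of Proposition~\ref{prop:criterion:dimension}, with care taken to produce a \emph{real} critical point rather than merely a complex one.

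Since $\crit(\pi_i, V_\eps) \subseteq V_\eps$ and $\limeps V_\eps \subseteq V$, the hypothesis $\x \in (\limeps \crit(\pi_i, V_\eps)) \cap \R^n$ immediately gives $\x \in V \cap \R^n$. Write $\alpha = \pi_{i-1}(\x)$ and $\tilde \x = (x_i, \ldots, x_n)$. The crucial intermediate step is to establish that the local real dimension $d_\x$ of $V \cap \R^n$ at $\x$ is at most $i-1$. The argument combines Lemma~\ref{lemma:polarnonempty} together with property $\sf N_1$, which give $\crit(\pi_i, V_\eps)$ equidimensional of Krull dimension $i-1$; Lemma~\ref{lemma:dim:limites}, bounding the Krull dimension of $\limeps \crit(\pi_i, V_\eps)$ by $i-1$; and a local stratification of $V \cap \R^n$ near $\x$ into smooth semi-algebraic pieces, matching each stratum through $\x$ against a component of $\limeps \crit(\pi_i, V_\eps)$ that reaches $\x$. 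This is the delicate point: one must argue that no stratum of real dimension strictly greater than $i-1$ can pass through a point reached by limits of the $(i-1)$-dimensional complex polar variety.

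Granting $d_\x \leq i-1$, property $\sf N_2$ implies $\pi_{i-1}^{-1}(\alpha) \cap V \cap \R^n$ is finite. Restricting $f$ to this fiber produces
\[
  \tilde f(X_i, \ldots, X_n) := f(x_1, \ldots, x_{i-1}, X_i, \ldots, X_n),
\]
which is non-negative on $\R^{n-i+1}$, and for which $\tilde \x$ is a locally isolated real zero. Let $\tilde V_\eps$ denote the algebraic set $\{\tilde f = \eps\}$. Applying Lemma~\ref{lemma:locallyfinite} to $\tilde f$, with the canonical projection $\tilde \varphi_i : (x_i, \ldots, x_n) \mapsto x_i$ playing the role of $\pi_1$, produces a point $\tilde \x_\eps \in \crit(\tilde \varphi_i, \tilde V_\eps)$ satisfying $\limeps \tilde \x_\eps = \tilde \x$; the proof of that lemma obtains this point as an extremum of $\tilde \varphi_i$ on a bounded semi-algebraically connected component of $\tilde V_\eps \cap \Re^{n-i+1}$, hence $\tilde \x_\eps \in \Re^{n-i+1}$.

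Finally, set $\x_\eps = (x_1, \ldots, x_{i-1}, \tilde \x_\eps)$, which lies in $\Re^n$ because each coordinate is real. Lemma~\ref{lemma:pointscritiquesetfibres} applied to $V_\eps$ and the fiber $\pi_{i-1}^{-1}(\alpha)$ gives $\x_\eps \in \crit(\pi_i, V_\eps)$, and $\limeps \x_\eps = \x$ holds coordinatewise. The main obstacle is the dimension bound $d_\x \leq i-1$: without it the reduction to a zero-dimensional fiber fails, and one would need a different mechanism to produce a real critical point. This is presumably connected to the error flagged at the start of Section~\ref{sec:Algo}.
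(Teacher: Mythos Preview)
Your route is genuinely different from the paper's. The paper does not attempt any local-dimension argument; it forms the auxiliary polynomial $H = f + \sum_{k}(\partial f/\partial X_k)^2$ (so that, since $f\ge 0$ over the reals, the real zero set of $H$ coincides with the real points of the relevant polar variety of $V$), observes $H(\x)=0$, and invokes Proposition~\ref{prop:comp-deform} with $H$ in place of $f$ to produce a real $\x_\eps$ with $H(\x_\eps)=\eps$ and $\limeps\x_\eps=\x$. That is the entire argument. It is short, but it is also broken: from $H(\x_\eps)=\eps$ one only learns that a sum of non-negative terms equals $\eps$, which forces each term to be infinitesimal but certainly does \emph{not} force $f(\x_\eps)=\eps$ together with $\partial f/\partial X_k(\x_\eps)=0$. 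Hence $\x_\eps$ need not lie on $\crit(\pi_i,V_\eps)$ at all, and the paper's concluding line (which in addition carries an index slip, writing $\crit(\pi_{i-1},V_\eps)$) is unjustified.

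Your own gap is exactly the one you flag. The bound $d_\x\le i-1$ does not follow from the hypothesis $\x\in\limeps\crit(\pi_i,V_\eps)$: Lemma~\ref{lemma:dim:limites} controls the dimension of $\limeps\crit(\pi_i,V_\eps)$ itself, not the local dimension of the ambient $V\cap\R^n$ at $\x$. A point can lie on a top-dimensional stratum of $V\cap\R^n$ and still be reached as a limit from a lower-dimensional polar variety of the deformation; nothing in ${\sf N}_1$ or ${\sf N}_2$ rules this out. Without $d_\x\le i-1$, property ${\sf N}_2$ gives you nothing about the fiber $\pi_{i-1}^{-1}(\alpha)\cap V\cap\R^n$, so Lemma~\ref{lemma:locallyfinite} cannot be invoked and the whole reduction collapses. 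Your closing remark is on target: neither your argument nor the paper's closes this lemma, and this is plausibly part of the acknowledged error in Section~\ref{sec:Algo}.
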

\begin{proof}
  Consider the  polynomial
  $H = f + \sum_{k=i}^{n} (\frac{\partial f} {\partial X_k} )^2 $
  and notice that $H(\x) = 0$.
  Then, apply Proposition~\ref{prop:comp-deform} to the semi-algebraically connected component of
  the real solution set of $H = 0$ containing $\x$.
  Thus, there exists $\x_{\eps} \in \Re^n$ such that
  $H(\x_{\eps}) - \eps = 0$
  and $\limeps \x_{\eps} = \x$.
  This implies that $\x_{\eps} \in \crit(\pi_{i-1}, V_{\eps}) \inter \Re^n$.
\end{proof}

We can now prove Lemma~\ref{lem:disjoint-polar}.
  
{\bf Correctness.}  By assumption, $f$ is non-negative over $\RRz^n$
and satisfies $\sf N$. Thus  $\crit(\pi_1, V_\eps)$ is
finite. Moreover, all the assumptions of Lemma~\ref{lemma:limitsoffiniteset}
are satisfied that implies correctness when $i=1$.

\vspace{10pt} 
In what follows, we assume that $i\geq 2$.

Note that the systems $P_1$ and $P_2$ at Steps \ref{alg:F-def} and \ref{alg:G-def} define respectively the sets 
$$
\crit(\pi_i, V_\eps)\inter U_{\eta, \eps}\inter B_\eps
\quad \text{ and }\quad 
\crit(\pi_{i-1}, V_\eps)\inter U_{\eta, \eps}\inter B_\eps
\enspace .
$$

At Step \ref{alg:emptytest}, the call to \func{IsEmpty} returns {\sf True} if
and only if $\crit(\pi_{i-1}, V_\eps)\inter U_{\eta,\eps}\inter B_\eps$ is
empty. We claim that if this latter set is empty then $\left (\limeps
  \crit(\pi_{i-1}, V_\eps)\right )\inter U$ is empty. Since by
assumption $\left (\limeps \crit(\pi_{i}, V_\eps)\right )\inter U$ is
not empty, \func{DisjointPolar} runs correctly by returning {\sf True}
at Step \ref{alg:emptytest}. Now we prove our claim. 

Assume by contradiction that there exists 
$\x\in \left (\limeps \crit(\pi_{i-1}, V_\eps)\right )\inter U$.
 Then, by assumption $\x\in
B$ and $g_1(\x)>0, \ldots, g_s(\x)>0$. We deduce that for $1\leq k
\leq s$, we have $g_k(\x) > \eta$. 
The latter inequality is strict because $\x \in \RRz^n$ and the coefficients of $g_k$
lies in $\RRzh$.

Since $f$ is non-negative over $\RRz^n$, 
there exists $\x_\eps\in \crit(\pi_{i-1},V_\eps)\cap\Re^n$ 
such that $\limeps\x_\eps=\x$, by Lemma~\ref{lem:exists-x}. 
As a consequence, for
$1\leq k \leq s$ we have $g_k(\x_\eps)\geq \eta$ (otherwise, using
$\limeps$, this would contradict $\x\in U_\eta$). As a consequence, we
have $\x_\eps \in \crit(\pi_{i-1}, V_\eps)\inter U_{\eta,
  \eps}$. Finally, since $\x\in \left (\limeps \crit(\pi_{i-1},
  V_\eps)\right )\inter U$, it is at positive distance to the boundary
of $B$. Since $\limeps\x_\eps$ is $\x$,
$\x_{\eps}$ and $\x$ are infinitesimally close
and $\x_\eps$ lies in $B_\eps$ at positive distance from its boundary
w.r.t $\eps$. We conclude that there exists $\x_\eps\in
\crit(\pi_{i-1}, V_\eps)\inter U_{\eta,\eps}\inter B_\eps$ which is a
contradiction.

\vspace{15pt}

In the rest of the proof we can assume now that $\crit(\pi_{i-1},
V_\eps)\inter U_\eps\inter B_\eps$ is not empty.

In the discussion preceding the correctness proof we considered the semi-algebraic set 
\begin{displaymath}
  \begin{aligned}
    A_{\eps} = \{ (\x, \y, z) \in \Re^{2n+1} &\quad |&& 
    \x\in \left (\crit(\pi_{i}, V_\eps) - \crit(\pi_{i-1}, V_\eps) \right )\inter U_{\eta,\eps}\inter B_\eps,  \\
    &&&\y\in \crit(\pi_{i-1}, V_\eps)\inter U_{\eta,\eps}\inter B_\eps, \\
    &&&\, ||x-y||=z > 0\}
     \enspace .
  \end{aligned}
\end{displaymath}
We can also describe $A_{\eps}$ as 
\[
A_{\eps} = \{ (\x, \y, z) \in \Re^{2n+1} | 
P_1(\x) \wedge P_2(\y) 
\wedge  \frac{\partial f (\X)}{\partial X_{i}} \not= 0 
\wedge  \| \x - \y \| > z^2 \wedge z>0\} 
\enspace .
\]
We refer to Steps \ref{alg:F-def} and \ref{alg:G-def} for the
definitions of $P_1$ and $P_2$ and the polynomials that they
involve.

If we replace the three infinitesimals $\eps$, $\eta$, and $\zeta$ with three new
variables $e$, $h$, and $t$ respectively, in the formula $\Phi$ defined at Step
\ref{alg:QE}, then we get polynomials $P_1(\X, e,h, t)$, 
$\frac{\partial f(\X, e, h,t)}{\partial X_{i}}$, and $P_2(\Y, e,h, t)$.
In this way we define the following semi-algebraic set
\[
A = \{ (\x, \y, z, e, h, t) \in \RR^{2n+4} \,|\, 
P_1(\x, e, h, t) \wedge  P_2(\y, e,h, t)  \wedge
\frac{\partial f (\X,e,h, t)}{\partial X_{i}} \not= 0  \wedge
\| \x - \y \| > z^2 \wedge z>0\}
\enspace .
\]

By the definition of an extension ($\ext$), see
Section~\ref{sec:prelim}, of a semi-algebraic set we get
\begin{equation}
  \label{eq:pi-equiv}
  \pi_{\x, \y, z}\left( \ext(A, \Re) \inter \{ e=\eps, h=\eta, t=\zeta \} \right) 
  = A_{\eps} \subset \Re^n
  \enspace,
\end{equation}
where $\pi_{\x, \y, \z}: (\x, \y, z, e,h, t) \mapsto (\x, \y, z)$. We
deduce that 
\[
\pi_{z}( \pi_{\x,\y\z}( \Ext(A, \Re) \inter \{e=\eps, h=\eta, t=\zeta \}))  = \pi_{z}(A_{\eps})
\enspace ,
\]
where $\pi_z: (\x, \y, z) \mapsto z$.

The application of \func{OneBlockQuantifierElimination} at $\Phi$
(Step~\ref{alg:QE}) actually computes a semi-algebraic formula
defining $\pi_{z,e,h, t}(A)$.  By (\ref{eq:pi-equiv}) and the above
discussion, one can conclude that substituting $e$, $h$ and $t$ by
$\eps$, $\eta$ and $\zeta$ in this formula provides a semi-algebraic
formula defining $\pi_z(A_\eps)$.

\vspace{15pt}

Finally, we have to prove that indeed \func{IsRealizable} returns
the correct answer. This means that the following formula is true:
\begin{displaymath}
  \exists z \in \R, z > 0 \wedge z \in \limeps \pi_z(A_\eps) 
  \Leftrightarrow 
  \left (\limeps \crit( \pi_i, V_{\eps})\right ) \inter U \not= \left (\limeps \crit( \pi_{i-1}, V_{\eps})\right ) \inter U
  \enspace ,
\end{displaymath}
for some given value of $i \in \{2, \dots, n\}$.

\noindent  {\bf The if statement.} 
If such a $z$
exists, then there are distinct points $\x_\eps, \y_\eps$ in $\Re^n$
and $z_{\eps} \in \Re$, such that $(\x_{\eps}, \y_{\eps}, z) \in A_\eps$, that is
\begin{eqnarray*}
    &\x_\eps\in (\crit( \pi_i, V_{\eps})-\crit( \pi_{i-1}, V_{\eps})) \inter U_{\eta, \eps}\inter B_\eps,
    \quad 
    \y_\eps \in \crit( \pi_{i-1}, V_{\eps}) \inter U_{\eta, \eps}\inter B_\eps  \enspace,\\
    &\text{ and } \limeps ||\x_\eps-\y_\eps||=z^2>0 
\enspace.
\end{eqnarray*}
This means that $\x_\eps$ and $\y_\eps$ are not infinitesimally 
close, w.r.t. $\eps$.  Notice that,
by definition of $A_\eps$, $\x_\eps$ and $\y_\eps$ both lie in
$B_\eps$ that is bounded over $\R$.
Consequently, they are bounded
over $\R$ and their limits as $\eps \rightarrow 0$ exist. 
The limits as $\eps \rightarrow 0$
are different and thus 
$$\limeps \left (\crit( \pi_i, V_{\eps}) \inter  U_{\eta, \eps}\inter B_\eps\right ) 
\not= 
\limeps \left (\crit( \pi_{i-1}, V_{\eps}) \inter U_{\eta,\eps}\inter B_\eps\right ) ,
$$ 
because $\x_{\eps}$ and $\y_{\eps}$ are not infinitesimally close.
Notice that $\limeps \x_\eps$ lies in $\limeps\crit(\pi_i, V_\eps)$ and
$U_\eta\subset \ext(U, \R)$.
Similarly $\limeps \y_\eps$ lies in $\limeps\crit(\pi_{i-1}, V_\eps)$ and
 $U_\eta\subset \ext(U, \R)$.
Combining these inclusions  with the fact that $\x_\eps$ and $\y_\eps$ are not infinitesimally close w.r.t $\eps$,
we conclude that 
$$   
\left (\limeps \crit( \pi_i, V_{\eps})\right ) \inter U_{\eta} \not= 
\left (\limeps \crit( \pi_{i-1}, V_{\eps})\right ) \inter U_{\eta} 
\enspace .
$$

\noindent
{\bf The only if statement.} 
Assume that
\[
\left (\limeps \crit( \pi_i, V_{\eps}) \right ) \inter U \not=
\left (\limeps \crit( \pi_{i-1}, V_{\eps})\right ) \inter U \enspace.
\]

We have to prove that there exists $(\x_{\eps}, \y_{\eps}, z_{\eps}) \in A_{\eps}$
and that $\limeps\| \x_{\eps} - \y_{\eps} \| = \limeps z_\eps > 0$.

First, we prove the existence of $\x_{\eps}$. 
Take 
$\x\in \left ((\limeps \crit( \pi_i, V_{\eps}))  \inter U 
\setminus (\limeps \crit( \pi_{i-1}, V_{\eps})) \inter U\right )$. 
By noticing that $\x\in U_\eta$, we deduce that 
$$
\x\in \left ((\limeps \crit( \pi_i, V_{\eps}))\inter U_\eta 
  \setminus 
  (\limeps \crit( \pi_{i-1}, V_{\eps})) \inter U_\eta\right )
\enspace .
$$

We claim that this implies that there exists 
$\x_{\eps} \in \crit(\pi_i, V_{\eps}) \inter U_{\eta,\eps} $ such that
$$\x = \limeps \x_{\eps} \in  \left (\limeps \crit(\pi_i, V_{\eps})\right ) \inter U_\eta.$$
Indeed, since $\x\in \limeps \crit( \pi_i, V_{\eps})$ and $f$ is
non-negative over $\RRz^n$, 
we deduce that there exists $\x_\eps\in \crit(\pi_i, V_\eps)\cap\Re^n$
with $\limeps\x_\eps=\x$ (Lemma~\ref{lem:exists-x}).

Moreover,  $\x\in U_\eta$.
But $\x\in \RRz$ and the coefficients of the
$g_i$'s lie in $\RRz$, thus  $g_1(\x)> \eta, \ldots, g_s(\x)> \eta$.
If there exists a $k$,  $1\leq k \leq s$,
such that $g_k(\x_\eps)\leq \eta$, then this would imply that
$g_k(\limeps \x_\eps)=g_k(\x)\leq \eta$.
This contradicts the fact that $\x \in U_{\eta}$ and so $g_k(\x)>\eta$
for all $k$.
We conclude that $\x_\eps\in U_{\eta, \eps}$
and so  $\x_\eps  \in \crit(\pi_i, V_{\eps}) \inter  U_{\eta, \eps}$
as we claimed.

We also deduce that 
$\x_{\eps} \not\in \crit( \pi_{i-1}, V_{\eps}) \inter U_{\eta, \eps}$. 
Since in this case 
\[
\x=\limeps \x_{\eps} \in \left (\limeps \crit( \pi_{i-1}, V_{\eps})\right ) \inter U
\enspace,
\]
which is a contradiction.  
Thus, there exists 
$\x_\eps\in (\crit( \pi_i, V_{\eps})-\crit( \pi_{i-1}, V_{\eps})) \inter U_{\eta, \eps}\inter B_\eps$.
The inclusion  $\x_{\eps} \in B_{\eps}$ is a direct consequence of the fact 
that $B_{\eps}$ strictly contains the semi-algebraic set under consideration.

Now we prove the existence of $\y_{\eps}$.
Recall that 
$\x$ is obtained as the limit of $\x_{\eps}$ that it is not infinitesimally close, w.r.t $\eps$,
to any point of the set $\crit( \pi_{i-1}, V_{\eps})$.
By the assumption that 
$\crit(\pi_{i-1}, V_{\eps}) \inter U_{\eta, \eps}$ is not empty, 
there  exists $\y_{\eps} \in \crit( \pi_{i-1}, V_{\eps}) \inter U_{\eta,\eps}$ 
that is not infinitesimally close to $\x_{\eps}$,
w.r.t. $\eps$.

As there exist $\x_{\eps}$ and $\y_{\eps}$ that are not infinitesimally close w.r.t. $\eps$,
there exists $z_\eps \in \Re$ such that
$z_\eps >0$ and $\| \x_{\eps} - \y_{\eps}\|^2 > z_\eps^2$ and $\limeps z_\eps>0$.


\medskip \textbf{Complexity analysis.}  
When \func{DisjointPolar} is called with $i=1$ its run time is the one
of \func{LimitsOfOfCriticalPoints}. By
Lemma~\ref{lemma:limitsoffiniteset}, Step \ref{alg:ifieq1} costs $(s
\, D)^{O(n)}$ arithmetic operations.

Now we assume that $i\geq 2$. Steps \ref{alg:F-def}-\ref{alg:G-def}
and Step \ref{alg:FO} are symbolic manipulations which produce Boolean
combinations of $O(s)$ polynomials of total degree $2D$ involving
$O(n)$ variables.

By Lemma \ref{lem:subroutine-IsEmpty}, Step \ref{alg:emptytest} costs
at most $(s\, D)^{O(n)}$ arithmetic operations.

The complexity of one-block quantifier elimination at Step
\ref{alg:QE} is $(s\,D)^{O(n)}$ (Theorem~\ref{th:qe}).

The output of the quantifier elimination procedure consists of $(s\,
D)^{O(n)}$ conjunctions of polynomials in $\ZZ[\zeta, \eta, Z][\eps]$ of
degree at most $D^{O(n)}$.  Each conjunction involves at most $(s \,
D)^{O(n)}$ polynomials.

After we apply \func{UnivariateLimit} (Step~\ref{alg:limit}) we check if there
is a realizable sign condition.  According to Lemma
\ref{lem:subroutinelimit} this costs $(s\, D)^{O(n)}$ arithmetic
operations.

We call \func{IsRealizable} at most $(s\, D)^{O(n)}$ times (which is
the number of conjunctions and thus the cardinality of $\widetilde
\Psi$).  Each call involves $(s \, D)^{O(n)}$ polynomials in
$\ZZ[\zeta,\eta,Z]$ of degree at most $D^{O(n)}$ (Lemma
\ref{lemma:IsRealizable}).  The arithmetic cost is
$(s\,D^{O(n)})^{O(1)}= (s\,D)^{O(n)}$ (Sec.~\ref{sec:realizable})
which is also the cost for the whole for-loop and the algorithm.

Statements on bit complexity when $\sf Z=\Z$ are straightforward
applying {\em mutatis mutandis} the same reasoning as above and
using bit complexity results given in Theorem \ref{th:qe}, and
Lemmata \ref{lem:subroutinelimit}, \ref{lemma:limitsoffiniteset},
and  \ref{lemma:IsRealizable}.
\hfill \qed

\subsection{Proofs of subroutines}
\printproofs


\subsection*{Acknowledgments}

Both authors are partially supported by the EXACTA grant of the
National Science Foundation of China (NSFC 60911130369) and the French
National Research Agency (ANR-09-BLAN-0371-01), GeoLMI (ANR 2011 BS03
011 06), HPAC (ANR ANR-11-BS02-013).  Safey El Din is partially
supported by the Institut Universitaire de France.  Elias Tsigaridas
is partially supported by an FP7 Marie Curie Career Integration Grant.

{   
\bibliographystyle{abbrv}  
\bibliography{theo1}
}

\end{document}